\newtheorem{theorem}{Theorem}[section]
\newtheorem{lemma}[theorem]{Lemma}
\newcommand{\Expect}[1]{\mathbb{E} \left[{#1}\right]}
\newcommand{\Expects}[2]{\mathbb{E}_{{#1}} \left[{#2}\right]}
\newcommand\blfootnote[1]{%
  \begingroup
  \renewcommand\thefootnote{}\footnotetext{#1}%
  \addtocounter{footnote}{-1}%
  \endgroup
}
\begin{document}

\title{Particle Approximations of the Score and Observed Information Matrix for Parameter Estimation in State Space Models With Linear Computational Cost}
\author{Christopher Nemeth, Paul Fearnhead and Lyudmila Mihaylova}
\blfootnote{Christopher Nemeth, Department of Mathematics and Statistics, Lancaster University, Lancaster LA1 4YF, UK (Email: c.nemeth@lancaster.ac.uk). Paul Fearnhead, Department of Mathematics and Statistics, Lancaster University, Lancaster LA1 4YF, UK (Email: p.fearnhead@lancaster.ac.uk). Lyudmila Mihaylova, Department of Automatic Control and Systems Engineering, University of Sheffield, Sheffield S1 3JD, UK (Email: l.s.mihaylova@sheffield.ac.uk).}

\maketitle

\begin{abstract}
\cite{Poyiadjis2011} show how particle methods can be used to estimate both the score and the observed information matrix for state space models. These methods either suffer from a computational cost that is quadratic in the number of particles, or produce estimates whose variance increases quadratically with the amount of data. This paper introduces an alternative approach for estimating these terms at a computational cost that is linear in the number of particles. The method is derived using a combination of kernel density estimation, to avoid the particle degeneracy that causes the quadratically increasing variance, and Rao-Blackwellisation. Crucially, we show the method is robust to the choice of bandwidth within the kernel density estimation, as it has good asymptotic properties regardless of this choice. Our estimates of the score and observed information matrix can be used within both online and batch procedures for estimating parameters for state space models. Empirical results show improved parameter estimates compared to existing methods at a significantly reduced computational cost. Supplementary materials including code are available.
\end{abstract}

\smallskip
\noindent \textbf{Keywords.} Gradient ascent algorithm; Maximum likelihood parameter estimation; Particle filtering; Sequential Monte Carlo; Stochastic approximation


\section{Introduction}
\label{sec:introduction}

State space models have become a popular framework to model nonlinear time series problems in engineering, econometrics and statistics \cite[]{cappe2005inference,durbin2001time}.
In this paper we consider the problem of maximum likelihood estimation of the model parameters, $\theta$, for nonlinear, non-Gaussian state space models, where there is no closed form expression for the marginal likelihood, $p(y_{1:T}|\theta)$, for data $y_{1:T}=\{y_1,y_2,\ldots,y_T\}$.

Using sequential Monte Carlo (SMC) methods, also known as particle filters, we propose an efficient method to create particle approximations of the score vector $\nabla \log p(y_{1:T}|\theta)$, which can be used within a gradient ascent algorithm to estimate $\theta$ by indirectly maximising the likelihood function. We show that our proposed algorithm can be applied offline, to estimate the $\theta$ from batches of data, or recursively, to update $\theta$ when new observations $y_t$ are received. Previous work by \cite{Poyiadjis2011}, has provided two approaches for estimating the score vector and observed information matrix. The first has a computational complexity that is linear in the number of particles, but it has the drawback that the variance of the estimates increases quadratically through time. The second method produces estimates whose variance increases linearly with time, but at the expense of a computational cost that is quadratic in the number of particles. The increased computational complexity of this algorithm limits its use for online applications.  

We propose a new method for estimating the score vector and observed information matrix using a novel implementation of a kernel density estimation technique \citep{West}, with Rao-Blackwellisation to reduce the Monte Carlo error of our estimates. The result is a linear-time algorithm which has substantially smaller Monte Carlo variance than the linear-time algorithm of \cite{Poyiadjis2011} and notable improvements over the fixed-lag smoother \citep{Olsson2008a} -- with empirical results showing the Monte Carlo variance of the estimate of the score vector increases only linearly with time. Furthermore, unlike standard uses of kernel density estimation, we derive results showing that our method is robust to the choice of bandwidth. For any fixed bandwidth our approach can consistently estimate the parameters as both the number of time-points and the number of particles go to infinity. 

Our final algorithm has similarities with the fixed-lag smoother of \cite{Dahlin2013c}, in terms of reducing the Monte Carlo error in the score and observed information estimates. However, one of the key advantages of our approach using Rao-Blackwellisation and kernel density estimation is that we are able to better approximate the observed information matrix, which in turn leads to faster and more accurate parameter estimation. A recently proposed linear time algorithm by \cite{westerborn2014}, supported by theoretical results \citep{Olsson2014}, could be also be used, but is not tested here. Finally, compared to competing methods, empirical results on a challenging eight parameter nonlinear model show that our algorithm produces more consistent parameter estimates, with an order of magnitude improvement in the rate of convergence.

\section{Inference for state space models}
\label{sec:particle-filtering}

\subsection{State space models}
\label{sec:state-space-models}

Consider the general state space model where $\{X_t; 1 \leq t \leq T\}$ represents a latent Markov process that takes values on $\mathcal{X} \subseteq \mathbb{R}^{n_x}$. The process is fully characterised by its initial density $p(x_1|\theta) = \mu_\theta(x_1)$ and transition probability density
\begin{equation}
  \label{eq:2}
  p(x_t | x_{1:t-1},\theta) = p(x_t|x_{t-1},\theta) = f_\theta(x_t|x_{t-1}),
\end{equation}
where $\theta \in \Theta$ represents a vector of model parameters. For an arbitrary sequence $\{z_i\}$ the notation $z_{i:j}$ corresponds to $(z_i,z_{i+1},\ldots,z_j)$ for $i\leq j$.

We assume that the process $\{X_t\}$ is not directly observable, but partial observations can be made via a second process $\{Y_t; 1 \leq t \leq T\} \subseteq \mathcal{Y} \subseteq \mathbb{R}^{n_y}$. The observations $\{Y_t\}$ are conditionally independent given $\{X_t\}$ and are defined by the probability density
\begin{equation}
  \label{eq:3}
  p(y_t|y_{1:t-1},x_{1:t},\theta) = p(y_t|x_t,\theta) = g_\theta(y_t|x_t).
\end{equation}

In the standard Bayesian context the latent process $\{X_{1:T}\}$ is estimated conditional on a sequence of observations $y_{1:T}$, for $T \geq 1$. If the parameter vector $\theta$ is known then the conditional distribution $p(x_{1:T}|y_{1:T},\theta) \propto p(x_{1:T},y_{1:T},\theta)$ can be evaluated where
\begin{equation}
  \label{eq:4}
  p(x_{1:T},y_{1:T},\theta) = \mu_\theta(x_1) \prod_{t=2}^{T}f_\theta(x_t|x_{t-1}) \prod_{t=1}^{T}g_\theta(y_t|x_t).
\end{equation}

For nonlinear, non-Gaussian state space models it is not possible to evaluate the posterior density $p(\theta,x_{1:T}|y_{1:T})$ in closed form. A popular approach for approximating these densities is to use a sequential Monte Carlo algorithm. 

\subsection{Sequential Monte Carlo algorithm}
\label{sec:sequ-monte-carlo}

SMC algorithms allow for the sequential approximation of the conditional density of the latent state given a sequence of observations, $y_{1:t}$, for a fixed $\theta$, which in this section we assume are known model parameters. For simplicity we shall focus on methods aimed at approximating the conditional density for the current state, $X_t$, but the ideas can be extended to learning about the full path of the process, $X_{1:t}$. Approximations of the density $p(x_{t}|y_{1:t},\theta)$ can be calculated recursively by first approximating $p(x_1|y_1,\theta)$, then $p(x_{2}|y_{1:2},\theta)$ and so forth. Each conditional density can be approximated by a set of $N$ weighted random samples, called particles, where
\begin{equation}
\label{eq:22}
  \hat{p}(dx_{t}|y_{1:t},\theta) = \sum_{i=1}^N w_{t}^{(i)}\delta_{X^{(i)}_{t}}(dx_{t}), \quad \forall i \: w_{t}^{(i)}\geq 0, \quad \sum_{i=1}^N w_{t}^{(i)}=1
\end{equation}
is an approximation for the conditional distribution and $\delta_{x_{0}}(dx)$ is a Dirac delta mass function located at $x_0$. The set of particles $\{X_{t}^{(i)}\}_{i=1}^N$ and their corresponding weights $\{w_t^{(i)}\}_{i=1}^{N}$ provide an empirical measure that approximates the probability density function $p(x_{t}|y_{1:t},\theta)$, where the accuracy of the approximation increases as $N \rightarrow \infty$ \citep{Crisan2002}.

We can recursively update our approximation using the following filtering recursion,
\begin{equation}
  \label{eq:6}
  p(x_t|y_{1:t},\theta) \propto g_\theta(y_t|x_t) \int f_\theta(x_t|x_{t-1})p(x_{t-1}|y_{1:t-1},\theta) dx_{t-1},
\end{equation}
where if we assume that at time $t-1$ we have a set of particles $\{X_{t-1}^{(i)}\}_{i=1}^{N}$, and weights $\{w_{t-1}^{(i)}\}_{i=1}^N$, which produce a discrete approximation to $p(x_{t-1}|y_{1:t-1},\theta)$, we can then create a Monte Carlo approximation for \eqref{eq:6} as
\begin{equation}
  \label{eq:7}
  p(x_t|y_{1:t},\theta) \approx c  g_\theta(y_t|x_t) \sum_{i=1}^{N} w_{t-1}^{(i)} f_\theta(x_t|x_{t-1}^{(i)}),
\end{equation}
where $c$ is a normalising constant. Particle approximations as given above can be updated recursively by propagating and updating the particle set using importance sampling techniques. There is now an extensive literature on particle filtering algorithms, see for example, \cite{Doucet2000} and \cite{Cappe2007}.

In this paper the particle approximations of the latent process are created with the auxiliary particle filter of \cite{Pitt1999a}.  This filter has a general form, and simpler filters can be derived as special 
cases \citep{Fearnhead2007}. The idea is to approximate
  $c w_{t-1}^{(i)}g_\theta(y_t|x_t)f_\theta(x_t|x_{t-1}^{(i)})$
with
$ \xi_t^{(i)}q(x_t|x_{t-1}^{(i)},y_t,\theta)$,
for a set of probabilities $\xi_t^{(i)}$ and proposal densities $q(x_t|x_{t-1}^{(i)},y_t,\theta)$.
We simulate particles at time $t$ by first choosing a particle at time $t-1$, with particle $x_{t-1}^{(i)}$ being chosen with probability $\xi_t^{(i)}$. 
We then propagate this to time $t$ by sampling our particle at time $t$, $x_t$, from $q(x_t|x_{t-1}^{(i)},y_t,\theta)$. The importance sampling weight assigned to our new particle $x^{(i)}_t$ is then
$w_{t-1}^{(i)}g_\theta(y_t|x_t)f_\theta(x_t|x_{t-1}^{(i)})/\xi_t^{(i)}q(x_t|x_{t-1}^{(i)},y_t,\theta)$.
Details are summarised in Algorithm \ref{alg1}.

\begin{algorithm}
\caption{Auxiliary Particle Filter}          
\label{alg1}                           
 \textit{Step 1:} iteration $t=1$. \\
 \quad Sample $\{x_1^{(i)}\}$ from the prior $p(x_1|\theta)$, set and normalise weights $w_1^{(i)} = g_\theta(y_1|x_1^{(i)})$. \\
 \textit{Step 2:} iteration $t=2,\ldots,T$.\\
 Assume a set of particles $\{x_{t-1}^{(i)}\}_{i=1}^N$ and associated weights $\{w_{t-1}^{(i)}\}_{i=1}^N$ that approximate $p(x_{t-1}|y_{1:t-1},\theta)$ and user-defined set of proposal weights $\{\xi_{t}^{(i)}\}_{i=1}^N$ and family of proposal densities $q(\cdot|x_{t-1},y_t,\theta)$. 

 \quad (a) Sample indices $\{k_1,k_2,\ldots,k_N\}$ from $\{1,\ldots,N\}$ with probabilities $\xi_{t}^{(i)}$. 

 \quad (b) Propagate particles $x_t^{(i)} \sim q(\cdot|x_{t-1}^{(k_i)},y_{t},\theta)$. 

 \quad (c) Weight each particle $w_t^{(i)} \propto \frac{w_{t-1}^{(k_i)} g_\theta(y_t|x_t^{(i)})f_\theta(x_t^{(i)}|x_{t-1}^{(k_i)})}{\xi_t^{(k_i)}q(x_t^{(i)}|x_{t-1}^{(k_i)},y_t,\theta)}$ and normalise the weights.
\end{algorithm}

\section{Parameter estimation for state space models}
\label{sec:param-estim-state}

\subsection{Maximum likelihood estimation}
\label{sec:maxim-likel-estim-1}

The maximum likelihood approach to parameter estimation is based on solving
\begin{equation*}
  \label{eq:14}
  \hat{\theta} = \arg\max_{\theta \in \Theta} \ \log p(y_{1:T}|\theta)= \arg\max_{\theta \in \Theta}\sum_{t=1}^{T} \log p(y_t|y_{1:t-1},\theta),
\end{equation*}
where,
\begin{equation*}
  p(y_t|y_{1:t-1},\theta) = \int \left(g_\theta(y_t|x_t) \int f_\theta(x_t|x_{t-1}) p(x_{t-1}|y_{1:t-1},\theta) dx_{t-1} \right) dx_t.
\end{equation*}
Aside from a few simple cases, it is not possible to calculate the log-likelihood in closed form. Pointwise estimates of the log-likelihood can be obtained using SMC approximations \citep{hurzler2001} for a fixed value $\theta$. If the parameter space $\Theta$ is discrete and low dimensional, then it is relatively straightforward to find the $\theta$ which maximises $\log p(y_{1:T}|\theta)$. For problems where the parameter space is continuous, finding the maximum likelihood estimate (MLE) can be more difficult. One option is to evaluate the likelihood over a grid of $\theta$ values, but this is computationally inefficient when the model dimension is large.

The gradient based method for parameter estimation, also known as the steepest ascent algorithm, maximises the log-likelihood function by evaluating the score vector (gradient of the log-likelihood) at the current parameters and then moving them in the direction of the gradient. For a given batch of data $y_{1:T}$, the unknown parameter $\theta$ can be estimated by choosing an initial estimate $\theta_0$, and then recursively solving
\begin{equation}
\label{eq:35}
\theta_{k} = \theta_{k-1} + \gamma_{k} \nabla \log p(y_{1:T}|\theta)|_{\theta=\theta_{k-1}}
\end{equation}
until convergence. 
Here $\gamma_k$ is a sequence of decreasing step sizes which satisfies the conditions $\sum_k \gamma_k = \infty$ and $\sum_k \gamma_k^2 < \infty$. One common choice is $\gamma_k=k^{-\alpha}, \mbox{where} \ 0.5<\alpha<1$. 
The conditions on $\gamma_k$ are necessary to ensure convergence to a value $\hat\theta$ for which $\nabla \log p(y_{1:T}|\hat\theta)=0$. A key ingredient to good statistical properties of the resulting estimator of $\theta$, such as consistency \citep{Crowder1986}, is that if the data are generated from $ p(y_{1:T}|\theta^*)$, then
\[
\Expect{\nabla \log p(Y_{1:T}|\theta^*)}=\int p(y_{1:T}|\theta^*) \nabla \log p(y_{1:T}|\theta^*)\mbox{d}y_{1:T}=0.
\]
That is, the expected value of $ \nabla \log p(y_{1:T}|\theta)$, with expectation taken with respect to the data, is 0 when $\theta$ is the true parameter value.

The rate of convergence of (\ref{eq:35}) can be improved if we are able to calculate the observed information matrix, which provides a measure of the curvature of the log-likelihood. When this is possible the Newton-Raphson method can be used and the step size parameter $\gamma_k$ is replaced with $-\gamma_k\{\nabla^2 \log p(y_{1:T}|\theta)\}^{-1}$.

\subsection{Estimation of the score and observed information matrix}
\label{sec:estim-score-vect}

For nonlinear and non-Gaussian state space models it is impossible to derive the score and observed information exactly. In such cases, SMC can be used to produce particle approximations in their place \citep{Poyiadjis2011}. If we assume that it is possible to obtain a particle approximation of the latent process $p(x_{1:T}|y_{1:T},\theta)$, then this approximation can be used to estimate the score vector $\nabla \log p(y_{1:T}|\theta)$ using Fisher's identity \citep{cappe2005inference}
\begin{equation}
\label{eq:23}
  \nabla \log p(y_{1:T}|\theta) = \int \nabla \log p(x_{1:T},y_{1:T}|\theta)p(x_{1:T}|y_{1:T},\theta)dx_{1:T}.
\end{equation}

A similar identity for the observed information matrix is given by \cite{Louis1982}
\begin{equation}
\label{eq:24}
  -\nabla^2 \log p(y_{1:T}|\theta) = \nabla \log p(y_{1:T}|\theta) \nabla \log p(y_{1:T}|\theta)^{\top} - \frac{\nabla^2p(y_{1:T}|\theta)}{p(y_{1:T}|\theta)},
\end{equation}
where,
\begin{align}
\label{eq:25}
 \frac{\nabla^2p(y_{1:T}|\theta)}{p(y_{1:T}|\theta)} = & \int \nabla \log p(x_{1:T},y_{1:T}|\theta) \nabla \log p(x_{1:T},y_{1:T}|\theta)^{\top} p(x_{1:T}|y_{1:T},\theta)dx_{1:T} \\
+ &  \int \nabla^2 \log p(x_{1:T},y_{1:T}|\theta)p(x_{1:T}|y_{1:T},\theta)dx_{1:T}. \nonumber
\end{align}
See \cite{cappe2005inference} for further details of both identities.

If we assume that the conditional densities \eqref{eq:2} and \eqref{eq:3} are twice continuously differentiable, then from the joint density \eqref{eq:4} we get
\begin{equation}
 \label{eq:36}
\nabla\log p(x_{1:T},y_{1:T}|\theta)=   \sum_{t=1}^{T} \left\{\nabla \log g_\theta(y_t|x_t) + \nabla \log f_\theta(x_t|x_{t-1})\right\},
\end{equation}
where we introduce the notation $f_\theta(x_1|x_{0})= \mu_\theta(x_1)$ to give a simpler form and similarly for the second derivative we have 
\begin{equation}
\label{eq:50}
  \nabla^2 \log p(x_{1:T},y_{1:T}|\theta)=  \sum_{t=1}^{T}\left\{ \nabla^2 \log g_\theta(y_t|x_t) + \nabla^2 \log f_\theta(x_t|x_{t-1})\right\}.
\end{equation}
In the next section we shall introduce a sequential Monte Carlo algorithm which creates approximations of these terms.

\section{Particle approximations of the score vector and observed information matrix}
\label{sec:kern-dens-estim}

\subsection{Kernel density methods to overcome particle degeneracy}

In this section we focus on applying our method to the score vector $\nabla\log p(y_{1:t}|\theta)$ and note that extending these results to the observed information matrix is straightforward and not given explicitly (see Algorithm \ref{alg6} for implementation details). Using a particle filter (Alg. \ref{alg1}) we can sample $x_t^{(i)}$ and let $x_{1:t}^{(i)}$ denote the path associated with that particle. At time $t$ particle $i$ stores value $\alpha_t^{(i)}=\nabla\log p(x_{1:t}^{(i)},y_{1:t}|\theta)$, which depends on the history of the particle, $x_{1:t}^{(i)}$. The estimate for $\alpha_t$ is then updated recursively, where at iteration $t$ we have particles $x_t^{(i)}$ with associated weights $w_t^{(i)}$. If we assume that particle $i$ is descended from particle $k_i$ at time $t-1$, then \eqref{eq:36} can be given as
\begin{equation} \label{eq:a1}
\alpha_t^{(i)}=\alpha_{t-1}^{(k_i)} + \nabla \log g_\theta(y_t|x_t^{(i)}) + \nabla \log f_\theta(x_t^{(i)}|x_{t-1}^{(k_i)}).
\end{equation}
The score vector $S_t=\nabla\log p(y_{1:t}|\theta)$ at time $t$ is then approximated as 
\[
S_t = \sum_{i=1}^N w_t^{(i)}\alpha_t^{(i)}.
\]
Estimation of the score vector in this fashion does not require that we store the entire path of the latent process $\{X_{1:T}^{(i)}\}_{i=1}^N$. However, the $\alpha_t^{(i)}s$  that are stored for each particle depend on the complete path-history of the associated particle. Particle approximations of this form are known to be poor due to inherent particle degeneracy over time \citep{Andrieu2005}. \cite{Poyiadjis2011} prove that the asymptotic variance of the estimate of the score vector increases at least quadratically with time. This can be attributed to the standard problem of particle degeneracy in particle filters when approximating the conditional distribution of the complete path of the latent state $p(x_{1:t}|y_{1:t})$. One approach to reduce this degeneracy is to use kernel density methods, such as the \cite{West} algorithm, which we apply here to the $\alpha_t^{(i)}$s.

The idea of \cite{West} is to combine shrinkage of the $\alpha_t^{(i)}$s towards their mean, together with adding noise. The latter is necessary for overcoming particle degeneracy, but the former is required to avoid the increasing variance of the $\alpha_t^{(i)}$s. Implementing this strategy we start by replacing $\alpha_{t-1}^{(k_i)}$ with a draw from a Gaussian kernel, where $k_i$ is drawn from a discrete distribution with probabilities $\xi_{t}^{(i)}$, and where the mean and variance of $\alpha_{t-1}^{(k_i)}$ are
\[
{S}_{t-1}=\sum_{i=1}^N w_{t-1}^{(i)}\alpha_{t-1}^{(i)} \quad\mbox{and}\quad  \Sigma_{t-1}^{\alpha}=\sum_{i=1}^N w_{t-1}^{(i)} (\alpha_{t-1}^{(i)}-S_{t-1})^{\top}(\alpha_{t-1}^{(i)}-S_{t-1}).
\]

If we let $0<\lambda<1$ be a shrinkage parameter, which is a fixed constant, and choose a density bandwidth $h>0$, we can replace $\alpha_{t-1}^{(k_i)}$ in (\ref{eq:a1}) with
\begin{equation} 
\label{eq:KDE}
\lambda \alpha_{t-1}^{(k_i)}  + (1-\lambda) S_{t-1}+\epsilon_t^{(i)},
\end{equation}
where $\epsilon_t^{(i)}$ is a realisation of a Gaussian distribution $\mathcal{N}(0,h^2\Sigma_{t-1}^{\alpha})$. By choosing $\lambda$ and $h$ such that $\lambda^2+h^2=1$ \citep{West}, it is then straightforward to show that this kernel density approximation preserves the mean and variance of the $\alpha_t^{(i)}$s.

\subsection{Rao-Blackwellisation}
\label{sec:rao-blackwellisation}

The stored $\alpha_t^{(i)}$ values do not have any effect on the dynamics of the state. Furthermore, we have a stochastic update for these terms which, when we use the kernel density approach, results in a linear-Gaussian update. This means that we can use the idea of  Rao-Blackwellisation  \cite[]{Doucet2000} to reduce the variance in our estimates of the score vector and observed information matrix. In practice this means replacing the $\alpha_t^{(i)}$ values by an appropriate distribution which is sequentially updated. Therefore we do not need to add noise to the approximation at each time step as we do with the standard kernel density approach. Instead we can recursively update the mean and variance of the distribution representing $\alpha_t^{(i)}$ and estimate the score vector $S_t$.

For $t\geq 2$, assume that at time $t-1$ each $\alpha_{t-1}^{(j)}$ is represented by a Gaussian distribution, 
\[
\alpha_{t-1}^{(j)} \sim \mathcal{N}(m_{t-1}^{(j)},h^2 V_{t-1}).
\]
Then from (\ref{eq:a1}) and (\ref{eq:KDE}) we have that
\begin{equation}
  \label{eq:9}
\alpha_{t}^{(i)} \sim \mathcal{N}(m_{t}^{(i)},h^2 V_{t}),  
\end{equation}
where,
\[
m_t^{(i)}=\lambda m_{t-1}^{(k_i)}+(1-\lambda)S_{t-1} + \nabla \log g_\theta(y_t|x_t^{(i)}) + \nabla \log f_\theta(x_t^{(i)}|x_{t-1}^{(k_i)}),
\]
and 
\[
V_t= V_{t-1}+\Sigma_{t-1}^{\alpha}=V_{t-1}+\sum_{i=1}^N w_{t-1}^{(i)} (m_{t-1}^{(i)}-S_{t-1})^{\top}(m_{t-1}^{(i)}-S_{t-1}).
\]

The estimated score vector at each iteration is a weighted average of the $\alpha_t^{(i)}$s, so we can  estimate  the score by
\begin{equation}
  \label{eq:10}
 S_t   =  \sum_{i=1}^N w_t^{(i)} m_{t}^{(i)}.  
\end{equation}
If we only want to estimate the score vector, then this shows that we only need to calculate the expected value of the $\alpha_t^{(i)}$s. However, if we wish to calculate the observed information matrix $I_t$, then from \eqref{eq:25}, a standard particle approximation would give
\[
I_t  = S_t S_t^{\top}-\sum_{i=1}^N w_t^{(i)}\left\{\alpha_t^{(i)}\alpha_t^{{(i)}^{\top}}+\beta_t^{(i)} \right\},  \\
\]
where we define $\beta_t^{(i)} = \nabla^2\log p(x_{1:t}^{(i)},y_{1:t}|\theta)$. Taking the same approach for $\beta_t^{(i)}$ as we did for $\alpha_t^{(i)}$, we define a Gaussian distribution for $\beta_t^{(i)}$ and update its mean and covariance in the same way as was shown above for $\alpha_t$. In practice we only need to calculate the mean, which we will denote as $n_t^{(i)}$. Using Rao-Blackwellisation, and the assumed distributions for $\alpha_t^{(i)}$ and $\beta_t^{(i)}$, gives the following estimate of the observed information matrix
\[
I_t  = S_t S_t^{\top}-\sum_{i=1}^N w_t^{(i)}\left\{m_t^{(i)}m_t^{{(i)}^{\top}}+h^2V_t+n_t^{(i)} \right\}.  \\
\]

Note the inclusion of $h^2V_t$ in this estimate. This term is important as it corrects for the fact that shrinking the values of $\alpha_t$ towards $S_t$ at each iteration will reduce the variability in these values. Without this correction the observed information would be overestimated. Details of this approach are summarised in Algorithm \ref{alg6}.
\begin{algorithm}
\caption{Rao-Blackwellised Score and Observed Information Matrix}          
\label{alg6}                           
 \textit{Initialise:} set $m_0^{(i)}=0$ and $n_0^{(i)}=0$ for $i=1\ldots,N$, $S_0=0$ and $B_0=0$. \\
At iteration $t = 1,\ldots, T$, 

 \quad (a)  Apply Algorithm \ref{alg1} to obtain $\{x_{t}^{(i)}\}_{i=1}^N$, $\{k_i\}_{i=1}^N$ and $\{w_t^{(i)}\}_{i=1}^N$ 

 \quad (b) Update the mean of the approximations for $\alpha_t$ and $\beta_t$ 
\begin{eqnarray*}
m_t^{(i)}= & \lambda m_{t-1}^{(k_i)} + (1-\lambda) S_{t-1}  +  \nabla \log g_\theta(y_t|x_t^{(i)}) + \nabla \log f_\theta(x_t^{(i)}|x_{t-1}^{(k_i)}) \\
n_t^{(i)} = & \lambda n_{t-1}^{(k_i)} + (1-\lambda) B_{t-1} + \nabla^2 \log g_\theta(y_t|x_t^{(i)}) + \nabla^2 \log f_\theta(x_t^{(i)}|x_{t-1}^{(k_i)}) 
\end{eqnarray*}
 \quad (b) Update the score vector and observed information matrix 
\begin{eqnarray*}
S_t = \sum_{i=1}^N w_t^{(i)}m_{t}^{(i)} \quad \mbox{and} \quad
I_t=S_t S_t^{\top} -\sum_{i=1}^N w_t^{(i)}(m_t^{(i)} m_t^{{(i)}^{\top}} + n_t^{(i)}) - h^2V_t
\end{eqnarray*}
\mbox{where} $V_t= V_{t-1}+\sum_{i=1}^N w_{t-1}^{(i)} (m_{t-1}^{(i)}-S_{t-1})^{\top}(m_{t-1}^{(i)}-S_{t-1})$ and $B_t=\sum_{i=1}^N w_t^{(i)}n_t^{(i)}.$ 
\end{algorithm}

Our new $\mathcal{O}(N)$ algorithm can be viewed as a generalisation of the \cite{Poyiadjis2011} algorithm. Setting $\lambda=1$ in Algorithm \ref{alg6} gives the Poyiadjis algorithm. 
However, this algorithm, as illustrated in Section \ref{sec:comp-appr} and proved by \cite{Poyiadjis2011}, has a quadratically increasing variance in $t$. 
As a result, \cite{Poyiadjis2011} introduce an alternative algorithm whose computational cost is quadratic in the number of particles, but which has better Monte Carlo properties. 
\cite{DelMoral2010} and \cite{douc2011} show that this alternative approach, under standard mixing assumptions, produces estimates of the score with an asymptotic variance that increases only linearly with time. 

\section{Theoretical justification}
\label{sec:theor-just}
\subsection{Monte Carlo accuracy}

We have motivated the use of both the kernel density approximation and Rao-Blackwellisation as a means to reduce the impact of particle degeneracy on the $\mathcal{O}(N)$ algorithm for 
estimating the score vector and observed information matrix. However, what can we say about the resulting algorithm? 

It is possible to implement Algorithm \ref{alg6} so as to store the whole history of the state $x_{1:t}$, rather than just the current value, $x_t$. This just involves extra storage, with our particles being $x_{1:t}^{(i)}=(x_t^{(i)},x_{1:t-1}^{(k_i)})$. Whilst unnecessary in practice, thinking about such an algorithm helps with understanding the algorithms properties.

One can fix $\theta$, the parameter value used when running the particle filter algorithm, and the data $y_{1:t}$. For convenience we drop the dependence on $\theta$ from notation in the following. The $m_t^{(i)}$ values calculated by the algorithm are just functions of the history of the state and the past estimated score values. We can define a set of functions $\phi_s(x_{1:t})$, 
\[
\phi_s(x_{1:t})= \nabla \log g_\theta(y_s|x_s) + \nabla \log f_\theta(x_s|x_{s-1}),
\]
where $t\geq s>0$ and functions, $m_s(x_{1:t})$, which depend on $m_{s-1}(x_{1:t})$ and the estimated score functions at previous time-steps, $S_{0:s-1}$, through 
\begin{equation}\label{eq:1a}
m_s(x_{1:t})=\lambda m_{s-1}(x_{1:t}) +(1-\lambda) S_{s-1} + \phi_s(x_{1:t}),
\end{equation}
with $m_0(x_{1:t})=0$. We then have that in Algorithm \ref{alg6}, $m_t^{(i)}=m_t(x_{1:t}^{(i)})$, is the value of this function evaluated for the state history associated with the $i$th particle at time $t$. 

Note that it is possible to iteratively solve the recursion (\ref{eq:1a}) to get
\begin{equation}\label{eq:1b}
m_s(x_{1:t})= \sum_{u=1}^s \lambda^{s-u} \phi_u(x_{1:t})+ (1-\lambda)\sum_{u=1}^s \lambda^{s-u}S_{u-1}
\end{equation}
where $0<\lambda<1$ is the shrinkage parameter.

If we set $\lambda=1$, then Algorithm \ref{alg6} reverts to the Poyiadjis $\mathcal{O}(N)$ algorithm and \eqref{eq:1b} simplifies to a sum of additive functionals $\phi_u(x_{1:t})$. The poor Monte Carlo properties of this algorithm stem from the fact that the Monte Carlo variance of SMC estimates of $\phi_u(x_{1:t})$ increase at least linearly with $s-u$. And hence the Monte Carlo variance of the SMC estimate of  $\sum_{u=1}^s \phi_u(x_{1:t})$, increases at least quadratically with $s$. 

In terms of the Monte Carlo accuracy of Algorithm \ref{alg6}, the key is that in (\ref{eq:1b}) we exponentially down-weight the contribution of $\phi_u(x_{1:t})$ as $s-u$ increases. Under quite weak assumptions, such as the Monte Carlo variance of the estimate of  $\phi_u(x_{1:t})$ being bounded by a polynomial in $s-u$, we will have that the Monte Carlo variance of estimates of $ \sum_{u=1}^s \lambda^{s-u} \phi_u(x_{1:t})$ will now be bounded in $s$.

For $\lambda<1$, we introduce the additional second term in (\ref{eq:1b}), without which there would be a substantial bias in the score estimate that would grow with $t$. Estimating this term is less problematic as the Monte Carlo variance of each $S_{u-1}$ will depend only on $u$, and will not increase as $s$ increases. Empirically, the resulting Monte Carlo variance of our estimates of the score increase only linearly with $s$ for a wide-range of models.

\subsection{Effect on parameter inference}

Now consider the  value of  $S_t$ in the limit as the number of particles goes to infinity, $N\rightarrow \infty$. We assume that standard conditions on the particle filter for the law of large numbers \citep{Chopin2004} hold. Then we have that
\[
S_t \rightarrow \Expects{\theta}{m_t(X_{1:t})|y_{1:t}}=\int m_t(x_{1:t})p(x_{1:t}|y_{1:t},\theta)\mbox{d}x_{1:t}.
\]

For $t=1,\ldots,T$, where we fix the data $y_{1:T}$, define $\bar{S}_t=\Expects{\theta}{m_t(X_{1:t})|y_{1:t}}$ to be the large $N$ limit of the estimate of the score at time $t$.  The following lemma expresses $\bar{S}_t$ in terms of expectations of the $\phi_s(\cdot)$ functions. Proofs from this section can be found in the supplementary material.
\begin{lemma}
Fix $y_{1:T}$. Then $\bar{S}_1= \Expects{\theta}{\phi_1(X_{1:t})|y_{1}}$ and
for $2\leq t \leq T$
\[
\bar{S}_t= \sum_{u=1}^t \lambda^{t-u} \Expects{\theta}{\phi_u(X_{1:t})|y_{1:t}} +
(1-\lambda)\sum_{u=1}^{t-1} \sum_{s=u}^{t-1} 
\lambda^{s-u}
\Expects{\theta}{\phi_u(X_{1:t})|y_{1:s}},
\]
where the expectations are taken with respect to the conditional distribution of $X_{1:t}$ given $y_{1:u}$:
\[
\Expects{\theta}{\phi_s(X_{1:t})|y_{1:u}} = \int \phi_s(x_{1:t})p(x_{1:t}|y_{1:u},\theta)\mbox{d}x_{1:t}.
\]
\end{lemma}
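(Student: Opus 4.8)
The plan is to prove the identity by induction on $t$, working from the closed-form solution (\ref{eq:1b}) of the recursion for $m_t$ and passing to the large-$N$ limit. First I would substitute the explicit expression (\ref{eq:1b}) for $m_t(x_{1:t})$ (taking $s=t$) into the definition $\bar{S}_t=\Expects{\theta}{m_t(X_{1:t})|y_{1:t}}$ and take the conditional expectation term by term. Under the law-of-large-numbers assumptions already invoked, the estimated scores $S_{u-1}$ are deterministic in the limit given the fixed data and converge to $\bar{S}_{u-1}$; since they do not depend on $x_{1:t}$ they factor out of the integral. Recalling $m_0=0$, so that $\bar{S}_0=0$, this yields the deterministic recursion
\[
\bar{S}_t=\sum_{u=1}^t \lambda^{t-u}\Expects{\theta}{\phi_u(X_{1:t})|y_{1:t}}+(1-\lambda)\sum_{u=1}^t \lambda^{t-u}\bar{S}_{u-1},
\]
which is lower-triangular and hence well-defines the sequence $\bar{S}_1,\bar{S}_2,\dots$ recursively. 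This recursion is the object I would then unfold.

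The key structural observation is that each $\phi_u$ depends on the path only through $x_{u-1:u}$, so that for every $s\geq u$ the quantity $\Expects{\theta}{\phi_u(X_{1:t})|y_{1:s}}$ equals $\int \phi_u(x_{u-1:u})p(x_{u-1:u}|y_{1:s},\theta)\,\mathrm{d}x_{u-1:u}$ and in particular is independent of the terminal index of the path. I would therefore regard the collection $\{\Expects{\theta}{\phi_u(X_{1:t})|y_{1:s}}:u\leq s\}$ as a fixed set of atoms and prove the lemma by checking that both sides of the claimed identity assign the same coefficient to each atom. The base case $t=1$ is immediate, since the double sum is empty and the recursion gives $\bar{S}_1=\Expects{\theta}{\phi_1(X_{1:t})|y_1}$.

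For the inductive step I would substitute the inductive hypothesis for each $\bar{S}_{u-1}$ (legitimate as $u-1<t$) into the second term of the recursion and collect, for a fixed atom $\Expects{\theta}{\phi_v(X_{1:t})|y_{1:s}}$ with $v\leq s\leq t-1$, its total coefficient inside $\sum_{u=1}^t\lambda^{t-u}\bar{S}_{u-1}$. The contributions split into two pieces: one from the leading single-sum term of each $\bar{S}_{u-1}$, which contributes only at $u=s+1$ and carries factor $\lambda^{t-1-v}$; and one from the double-sum term, which contributes for all $u\geq s+2$ and produces the geometric sum $\sum_{u=s+2}^{t}\lambda^{t-u}=(1-\lambda^{t-s-1})/(1-\lambda)$, giving factor $\lambda^{s-v}-\lambda^{t-1-v}$. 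Adding the two pieces, the $\lambda^{t-1-v}$ terms cancel and the coefficient collapses to exactly $\lambda^{s-v}$, hence to $(1-\lambda)\lambda^{s-v}$ after the prefactor $(1-\lambda)$; meanwhile the atoms with $s=t$ arise solely from the first term and carry coefficient $\lambda^{t-v}$. Both match the target, completing the induction.

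The main obstacle I anticipate is purely the bookkeeping in this last step: reindexing the nested sums correctly, respecting the ranges forced by $\bar{S}_0=0$ and by the conditioning index $s$, and verifying the telescoping of the geometric series so that the spurious $\lambda^{t-1-v}$ contributions cancel. Nothing beyond the closed form (\ref{eq:1b}), the law-of-large-numbers limit, and the path-length invariance of the $\phi_u$ expectations is needed; once the coefficient of every atom is matched, the stated formula for $\bar{S}_t$ follows.
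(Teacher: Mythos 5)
Your proposal is correct and follows essentially the same route as the paper's Appendix A proof: an induction on $t$ that unfolds the closed form (\ref{eq:1b}), substitutes the inductive hypothesis for each $\bar{S}_{u-1}$, and matches the coefficient of each atom $\Expects{\theta}{\phi_v(X_{1:t})|y_{1:s}}$ by splitting off the $u=s+1$ contribution and telescoping the geometric sum over $u\geq s+2$ (the paper's split into $k=s$ versus $k>s$ in its $\tilde{C}^{t}_{u,s}$ computation is the same step in different indexing). Your explicit remark that $\phi_u$ depends only on $x_{u-1:u}$, so the atoms are invariant to the terminal index of the path, is a useful clarification that the paper leaves implicit, but it does not change the argument.
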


We now consider taking expectation of $\bar{S}_T$ with respect to the data. We write $\bar{S}_T(y_{1:T};\theta)$ to denote the dependence on the data $y_{1:T}$ and the choice of parameter $\theta$ when implementing the particle filter algorithm. A direct consequence of Lemma 1 is the following theorem.
\begin{theorem}
\label{thr:1}
Let $\theta^*$ be the true parameter value, and $T$ a positive integer. Assume regularity conditions exist so that for all $t\leq T$,
\begin{equation} \label{eq:t1}
\Expects{\theta^*}{\nabla \log p(X_{1:t},Y_{1:t}|\theta^*)}=0,
\end{equation}
where expectation is taken with respect to $p(X_{1:T},Y_{1:T}|\theta^*)$. Then
\[
\Expects{\theta^*}{\bar{S}_T(Y_{1:T};\theta^*)}=0,
\]
where expectation is taken with respect to $p(Y_{1:T}|\theta^*)$. 
\end{theorem}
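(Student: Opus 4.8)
The plan is to start from the closed-form expression for $\bar{S}_T$ supplied by Lemma 1 (specialised to $t=T$ and $\theta=\theta^*$) and take the expectation with respect to the data $Y_{1:T}\sim p(\cdot|\theta^*)$. Because both sums there are finite, I can interchange expectation and summation by linearity, so the entire theorem reduces to showing that each building block
\[
\Expects{\theta^*}{\Expects{\theta^*}{\phi_u(X_{1:T})|Y_{1:s}}}
\]
vanishes for every pair $1\le u\le s\le T$ (the single sum in Lemma 1 corresponds to $s=T$, the double sum to $u\le s\le T-1$). The property I will exploit is that in every such term $u\le s$, and $\phi_u$ depends only on $(X_{u-1},X_u,Y_u)$, all of which lie inside the conditioning window indexed by $s$.

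The key step is a tower-property collapse. Since the smoothing/prediction law $p(x_{1:T}|y_{1:s},\theta^*)$ appearing inside $\bar{S}_T$ is evaluated at the same parameter $\theta^*$ that generates the data, the inner quantity $\Expects{\theta^*}{\phi_u(X_{1:T})|Y_{1:s}}$ is a genuine conditional expectation of $\phi_u$ under the single joint law $p(X_{1:T},Y_{1:T}|\theta^*)$ given the information $Y_{1:s}$. The law of total expectation then gives
\[
\Expects{\theta^*}{\Expects{\theta^*}{\phi_u(X_{1:T})|Y_{1:s}}}=\Expect{\phi_u(X_{1:T})},
\]
where the right-hand expectation is over the full joint $p(X_{1:T},Y_{1:T}|\theta^*)$ and is independent of $s$. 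It then remains to show $\Expect{\phi_u(X_{1:T})}=0$. Writing $G_t:=\nabla\log p(X_{1:t},Y_{1:t}|\theta^*)$, the additive decomposition \eqref{eq:36} yields $\phi_u=G_u-G_{u-1}$ with the convention $G_0=0$, so that $\Expect{\phi_u}=\Expect{G_u}-\Expect{G_{u-1}}=0-0=0$ by hypothesis \eqref{eq:t1} applied at times $u$ and $u-1$. Summing the resulting zero contributions recovers $\Expects{\theta^*}{\bar{S}_T(Y_{1:T};\theta^*)}=0$.

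The main obstacle is conceptual rather than computational: it is the justification of the tower-property step, which relies crucially on the filter parameter coinciding with the true parameter $\theta^*$. If these differed, the nested expectation would not reduce to an expectation under one consistent measure and the cancellation would break down — which is exactly why the conclusion is an unbiasedness statement only at $\theta^*$, mirroring the classical fact that the expected score vanishes at the truth. A secondary point to verify is that $u\le s$ in every summand, guaranteeing that $\phi_u$ is measurable with respect to the conditioning information, so no ``future observation'' complication arises; the integrability needed to justify both the interchange of expectation with summation and the tower property is provided by the assumed regularity conditions.
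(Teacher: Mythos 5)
Your proof is correct and follows essentially the same route as the paper's: expand $\bar{S}_T$ via Lemma 1, collapse each term by the tower property (valid precisely because the filter parameter equals $\theta^*$), and kill $\Expect{\phi_u(X_{1:T})}$ by the telescoping $\phi_u=\nabla\log p(X_{1:u},Y_{1:u}|\theta^*)-\nabla\log p(X_{1:u-1},Y_{1:u-1}|\theta^*)$ together with hypothesis \eqref{eq:t1} at $t=u$ and $t=u-1$. One small remark: your ``secondary point'' about $\phi_u$ being measurable with respect to $Y_{1:s}$ is neither true (it depends on the latent states) nor needed --- the tower property requires only integrability, not measurability of $\phi_u$ with respect to the conditioning $\sigma$-field.
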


The theorem shows that for any $0<\lambda<1$, the expectation of $\bar{S}_T(y_{1:T};\theta^*)$ at the true parameter $\theta^*$ is zero, and hence $\bar{S}_T(y_{1:T};\theta)=0$ are a set of unbiased estimating equations for $\theta$. Using our estimates of the score function within the steepest gradient ascent algorithm is thus using Monte Carlo estimates to approximately solve this set of unbiased estimating equations. 

The accuracy of the final estimate of $\theta$ will depend both on the amount of Monte Carlo error, and also the accuracy of the estimator based on solving the underlying estimating equation. Note that the statistical efficiency of the estimator obtained by solving $\bar{S}_T(y_{1:T};\theta)=0$ may be different, and lower, than that of solving $\nabla \log p(y_{1:T}|\theta)=0$. However in practice we would expect this to be more than compensated by the reduction in Monte Carlo error we get. We investigate this empirically in the following sections.

\section{Comparison of approaches}
\label{sec:comp-appr}

In this section we shall evaluate our algorithm and compare existing approaches for estimating the score vector. Most importantly, we will investigate how the performance of our method depends on the choice of shrinkage parameter, $\lambda$. For comparison, we consider a linear-Gaussian state space model, where it is possible to analytically calculate the score vector and observed information matrix using a Kalman filter \citep{Kalman1960}. 

Consider a first order autoregressive model AR(1) observed with Gaussian noise:
\begin{equation}
  \label{eq:15}
  Y_t|X_t=  x_t \sim  \mathcal{N}(x_t,\tau^2), \quad X_t | X_{t-1}=  x_{t-1} \sim \mathcal{N}(\phi x_{t-1},\sigma^2),\quad X_1 \sim \mathcal{N}\left(0,\frac{\sigma^2}{1-\phi^2}\right),
\end{equation}
where we can derive the optimal proposal distribution for the particle filter
\begin{equation*}
  \label{eq:16}
q(x_t|x_{t-1}^{(i)},y_t) = \mathcal{N}\left(x_t \bigg| \frac{\phi x_{t-1}^{(i)} \tau^2 + y_t \sigma^2}{\sigma^2+\tau^2},\frac{\sigma^2 \tau^2}{\sigma^2+\tau^2}\right), ~~
\xi_t^{(i)} \propto w_{t-1}^{(i)} \mathcal{N}(y_t|\phi x_{t-1}^{(i)},\sigma^2+\tau^2). 
\end{equation*}

We shall compare our algorithm (Alg. \ref{alg6}) against the $\mathcal{O}(N)$ and $\mathcal{O}(N^2)$ algorithms of \cite{Poyiadjis2011}, and also the fixed-lag smoother of \cite{kitagawa2001}.

The fixed-lag smoother is based on approximating $p(x_{1:t}|y_{1:T},\theta)$ with $p(x_{1:t}|y_{1: \min\{t+L,T\}},\theta)$, where $L$ is some pre-specified lag. The posterior, $p(x_{1:t}|y_{1: \min\{t+L,T\}},\theta)$, can then be estimated using an $\mathcal{O}(N)$ algorithm. This method reduces the Monte Carlo variance at the cost of introducing a bias. Theoretical results given by \cite{Olsson2008a} show that as $T$ increases the optimal choice of $L$, in terms of a bias-variance trade-off, is $O(\log(T))$.

We perform a comparison on a data set of length $T=20,000$ simulated from the autoregressive model \eqref{eq:15} with parameters $\theta^*=(\phi,\sigma,\tau)^\top=(0.8,0.5,1)^\top$. Our method and the Poyiadjis $\mathcal{O}(N)$ have the same computational cost and are implemented with $N=50,000$. The Poyiadjis $\mathcal{O}(N^2)$ algorithm, which has a quadratic computational cost, is implemented  with $N=500$. The comparisons were run on a Dell Latitude laptop with a 1.6GHz processor, where each iteration of the $\mathcal{O}(N)$ algorithms takes approximately 1 minute for $N=50,000$. The $\mathcal{O}(N^2)$ takes 5.1 minutes for $N = 500$. This corresponds to a CPU cost that is approximately 5 times greater than the $\mathcal{O}(N)$ methods. 

\begin{figure}
  \centering
  \subfigure{\includegraphics[width=0.45\textwidth]{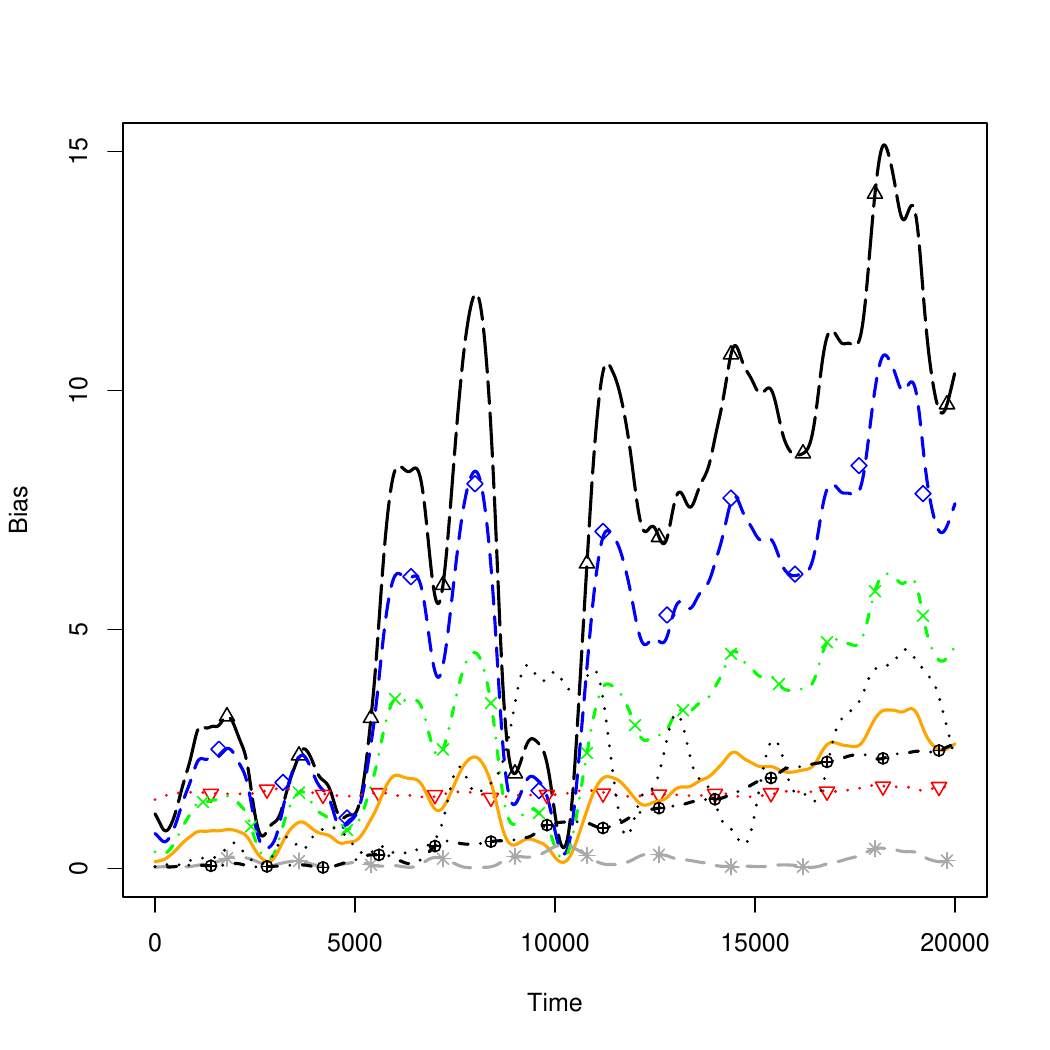}}
   \subfigure{\includegraphics[width=0.45\textwidth]{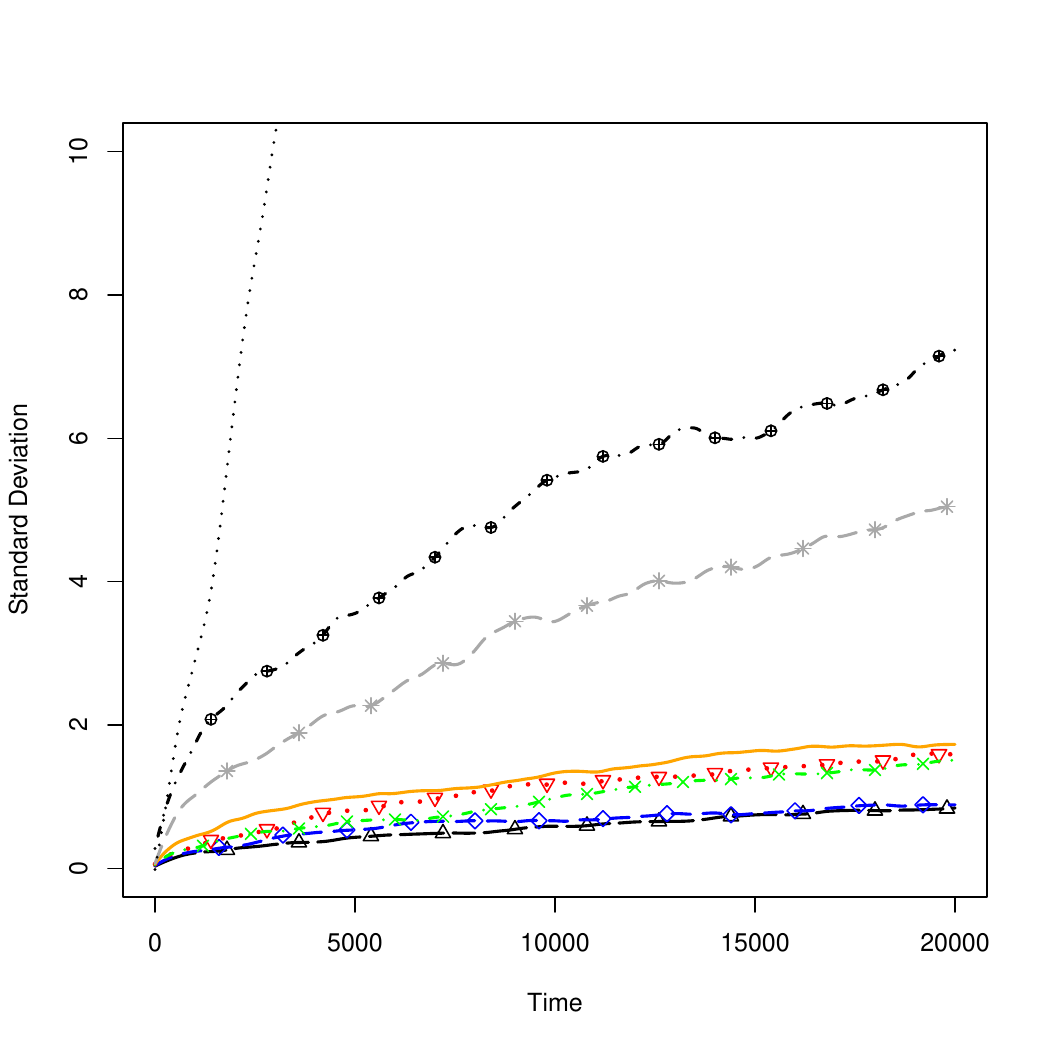}}
  \subfigure{\includegraphics[width=0.45\textwidth]{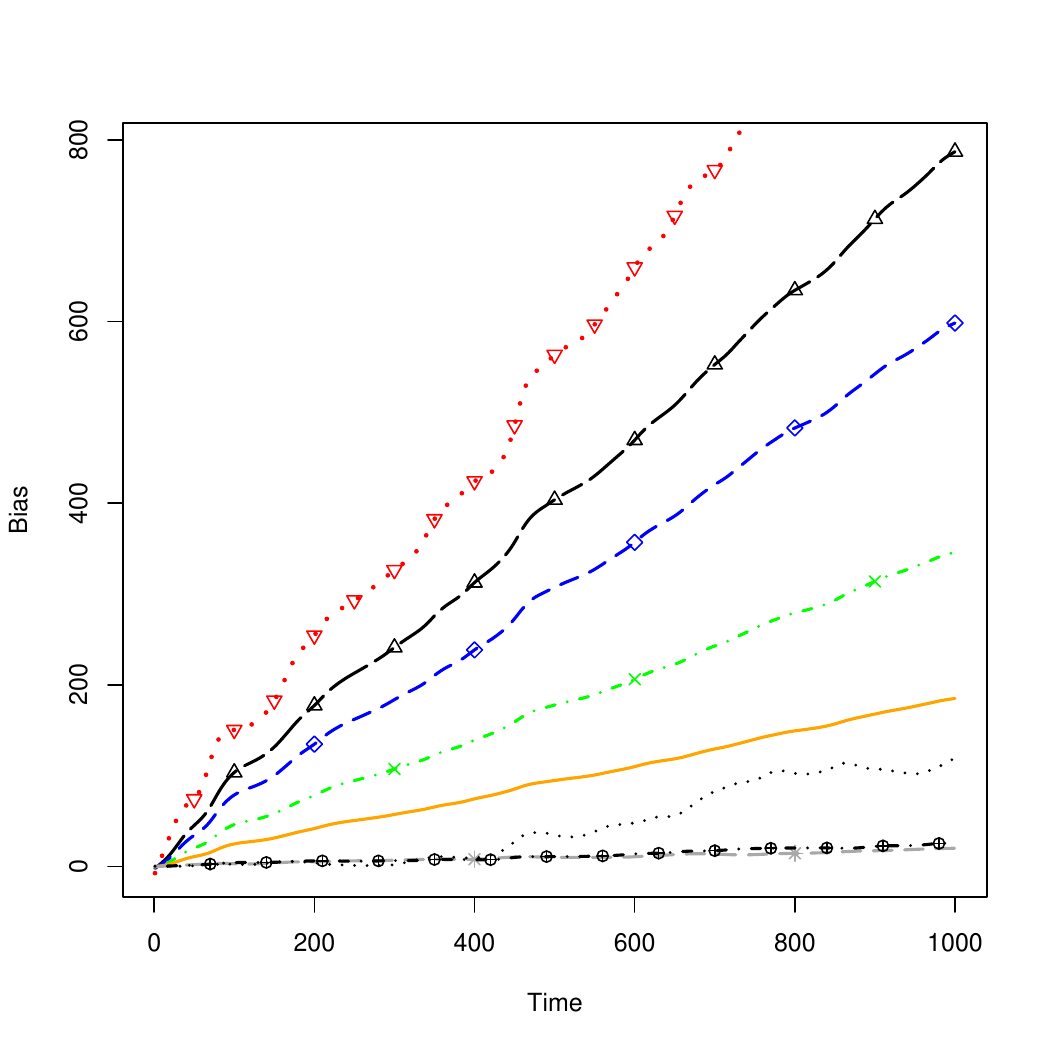}}
   \subfigure{\includegraphics[width=0.45\textwidth]{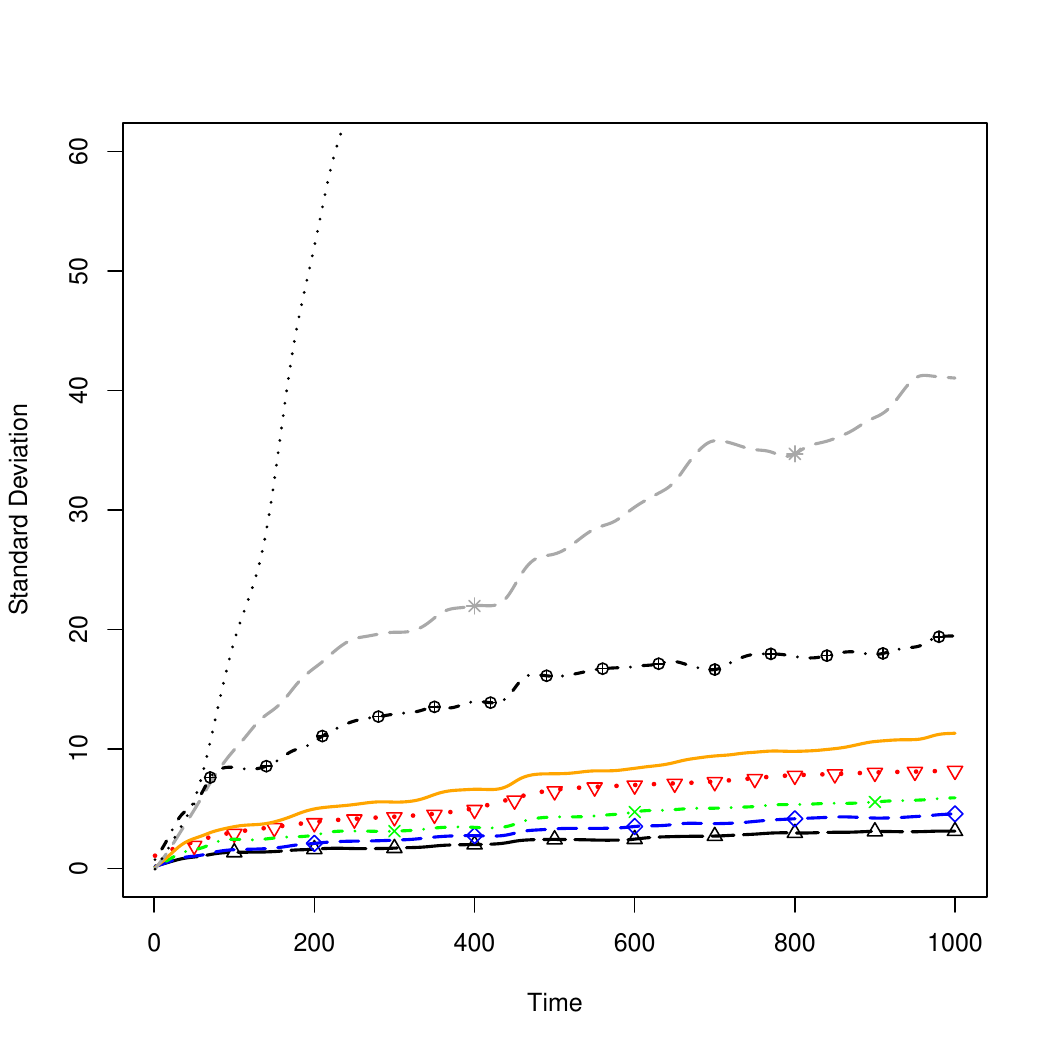}}
  \caption{Absolute bias (left column) and standard deviation (right column) of score estimates for $\tau$ (top row) and observed information matrix for the $\phi$ component (bottom row) from the autoregressive model using our $\mathcal{O}(N)$ algorithm with $\lambda = 0.99$ (\color{gray}$\protect\rule[0.5ex]{0.1cm}{0.7pt}$ $\ast$ $\protect\rule[0.5ex]{0.4cm}{0.7pt}$ $\protect\rule[0.5ex]{0.1cm}{0.7pt}$ $\ast$ $\protect\rule[0.5ex]{0.4cm}{0.7pt}$ \color{black}), $\lambda = 0.95$ (\color{orange}\protect\rule[0.5ex]{0.8cm}{0.7pt} \color{black}), $\lambda = 0.9$ (\color{green} $\boldsymbol{-} \cdot \times \boldsymbol{-} \cdot \times \boldsymbol{-}$ \color{black}), $\lambda = 0.8$ (\color{blue} $\protect\rule[0.5ex]{0.2cm}{0.7pt}$ $\Diamond$ $\protect\rule[0.5ex]{0.2cm}{0.7pt}$ $\protect\rule[0.5ex]{0.2cm}{0.7pt}$ $\Diamond$ $\protect\rule[0.5ex]{0.2cm}{0.7pt}$ \color{black}), $\lambda = 0.7$ ($\protect\rule[0.5ex]{0.1cm}{0.7pt}$ $\vartriangle$ $\protect\rule[0.5ex]{0.4cm}{0.7pt}$ $\protect\rule[0.5ex]{0.1cm}{0.7pt}$ $\vartriangle$ $\protect\rule[0.5ex]{0.4cm}{0.7pt}$), Fixed-lag smoother $L = 10$ (\color{red} $\cdot\triangledown\cdot\cdot\triangledown\cdot\cdot\triangledown\cdot$ \color{black}), and the Poyiadjis $\mathcal{O}(N)$ 
algorithm ($\boldsymbol{\cdot\cdot\cdot\cdot\cdot\cdot}$) and $\mathcal{O}(N^2)$ with $N = 500$ ($\boldsymbol{-} \cdot \otimes \boldsymbol{-} \cdot \otimes \boldsymbol{-}$).}
  \label{fig:shrink}
\end{figure}

The results given in Figure \ref{fig:shrink} show that for all but the Poyiadjis $\mathcal{O}(N)$ algorithm the standard deviation of the score estimate is increasing at a rate of $T^{-1/2}$, giving a variance that is increasing approximately linearly with time. For the Poyiadjis $\mathcal{O}(N)$, the variance is increasing quadratically (standard deviation is increasing linearly) in line with the established theoretical results. As for the $\mathcal{O}(N^2)$ algorithm, the variance increases only linearly, as expected, but at an increased computational cost compared to the $\mathcal{O}(N)$ algorithms. The variance could be further reduced by increasing the number of particles, but this will lead to a further increase in the computational cost. While the variance of the $\mathcal{O}(N^2)$ is only linearly increasing, it is worth noting that it is larger than what is given by our algorithm for all values of $\lambda$.

For estimating the score, the fixed-lag smoother performs well in terms of both bias and variance, and we note that, while not shown in Figure \ref{fig:shrink}, varying the lag about $\log(T)$ does not dramatically change the outcome, but $L=10$ seems to give the best result. However, while the fixed-lag smoother appears to work well when estimating the score, it struggles to accurately estimate the observed information, with a large bias for a range of lags ($1\leq L \leq 100$). This is because the fixed-lag approach reduces the variability in the estimates of $\nabla \log p(x_{1:t},y_{1:t}|\theta)$ associated with each particle, which means that it under-estimates the first term in Louis's identity \eqref{eq:24}. Whilst our approach also reduces the variability in the estimates of $\nabla \log p(x_{1:t},y_{1:t}|\theta)$ associated with each particle, we are able to correct for this within the Rao-Blackwellisation scheme (see Section \ref{sec:rao-blackwellisation} for details). This drawback is further explored in Section \ref{sec:struct-autor-model}.

For our algorithm, we notice that the bias and variance of both the score estimate, and observed information matrix, vary according to $\lambda$. Reducing $\lambda$ has the effect of increasing the bias, but at the same time, reducing the Monte Carlo variance of the estimates. The figures show that if we wish to minimise both bias and variance, then setting $\lambda \approx 0.95$ will produce an estimate for the score and observed information which exhibit only linearly increasing variance, with minimal bias introduced as a result. In fact, the results suggest that setting $0.9 \leq \lambda \leq 0.99$ will produce the best overall results. However, ultimately interest lies in estimating the model parameters, and in Section \ref{sec:parameter-estimation} we will see that our algorithm produces reliable estimates of the model parameters for all values of $\lambda$.

\section{Parameter estimation}
\label{sec:parameter-estimation}

Our $\mathcal{O}(N)$ algorithm, as described in Section \ref{sec:kern-dens-estim}, can be used to estimate the score vector and observed information matrix. These estimates can then be used within the steepest ascent algorithm \eqref{eq:35} to obtain the MLE for $\theta$.

The steepest ascent algorithm \eqref{eq:35} performs offline maximum likelihood estimation using batches of data $y_{1:T}$, which can be useful when dealing with small data sets. Alternatively, we could implement recursive parameter estimation, where estimates of the parameters $\theta_t$ are updated as new observations are made available. Ideally this would be achieved by using the gradient of the predictive log-likelihood,
 \begin{equation}
   \label{eq:1}
\theta_{t} = \theta_{t-1} + \gamma_{t} \nabla \log p(y_t|y_{1:t-1},\theta_t),   
 \end{equation}
where,
  \[
  \nabla \log p(y_t|y_{1:t-1},\theta_t) = \nabla \log p(y_{1:t}|\theta_t) - \nabla \log p(y_{1:t-1}|\theta_{t-1}).
\]
However, getting Monte Carlo estimates of $\nabla \log p(y_t|y_{1:t-1},\theta_t)$ is difficult due to using different values of $\theta$ at each iteration of the sequential Monte Carlo algorithm. Thus, following \cite{LeGland1997} and \cite{Poyiadjis2011}, we make a further approximation, and ignore the fact that $\theta$ changes with $t$. Instead we update $\theta_t$ at each iteration using the following approximation to this gradient:
\[
\nabla \log \hat{p}(y_t|y_{1:t-1},\theta_t) = S_t-S_{t-1}.
\]

\subsection{Autoregressive model}
\label{sec:struct-autor-model}

We compare the accuracy and efficiency of estimating the parameters of the AR(1) model \eqref{eq:15} using the various algorithms given in Section \ref{sec:comp-appr} in both an offline and online setting. Starting with the batch case (offline), we simulated 1,000 observations from the model with parameters $\theta^* = (\phi,\sigma,\tau)^\top = (0.9,0.7,1)^\top$ and estimated the score vector and observed information matrix using our $\mathcal{O}(N)$ algorithm, the fixed-lag smoother, and the $\mathcal{O}(N)$ and $\mathcal{O}(N^2)$ algorithms of Poyiadjis. The estimates of the score vector and observed information matrix were used within the Newton-Raphson algorithm \eqref{eq:35} to estimate $\theta$. The starting parameters for the algorithm are $\theta_0 = (\phi,\sigma,\tau)^\top = (0.6,1,0.7)^\top$. The AR(1) model is linear-Gaussian, and therefore allows for a direct comparison against the Kalman filter, where the score and observed information matrix can be calculated analytically.
\begin{figure}[h]
  \centering
 \subfigure{\includegraphics[width=0.45\textwidth]{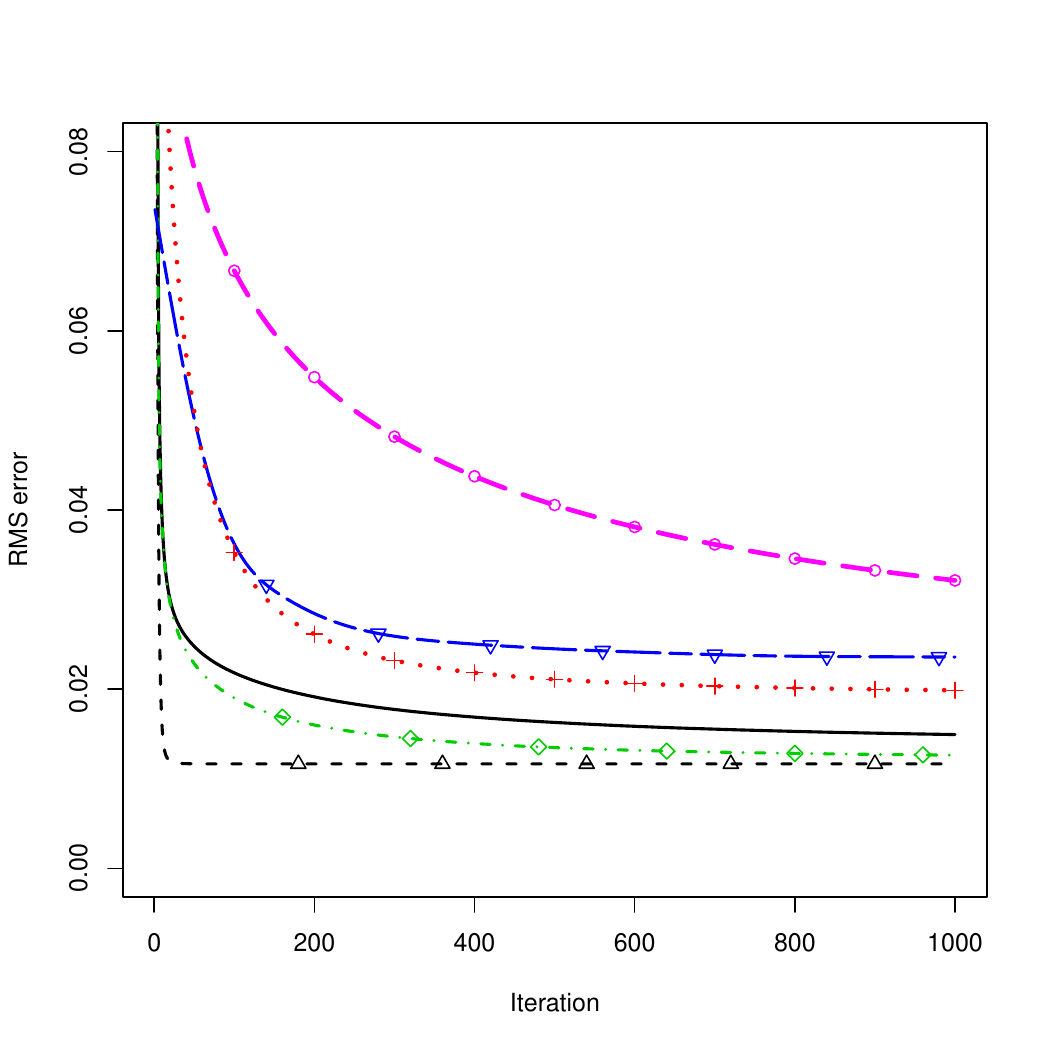}} 
   \subfigure{\includegraphics[width=0.45\textwidth]{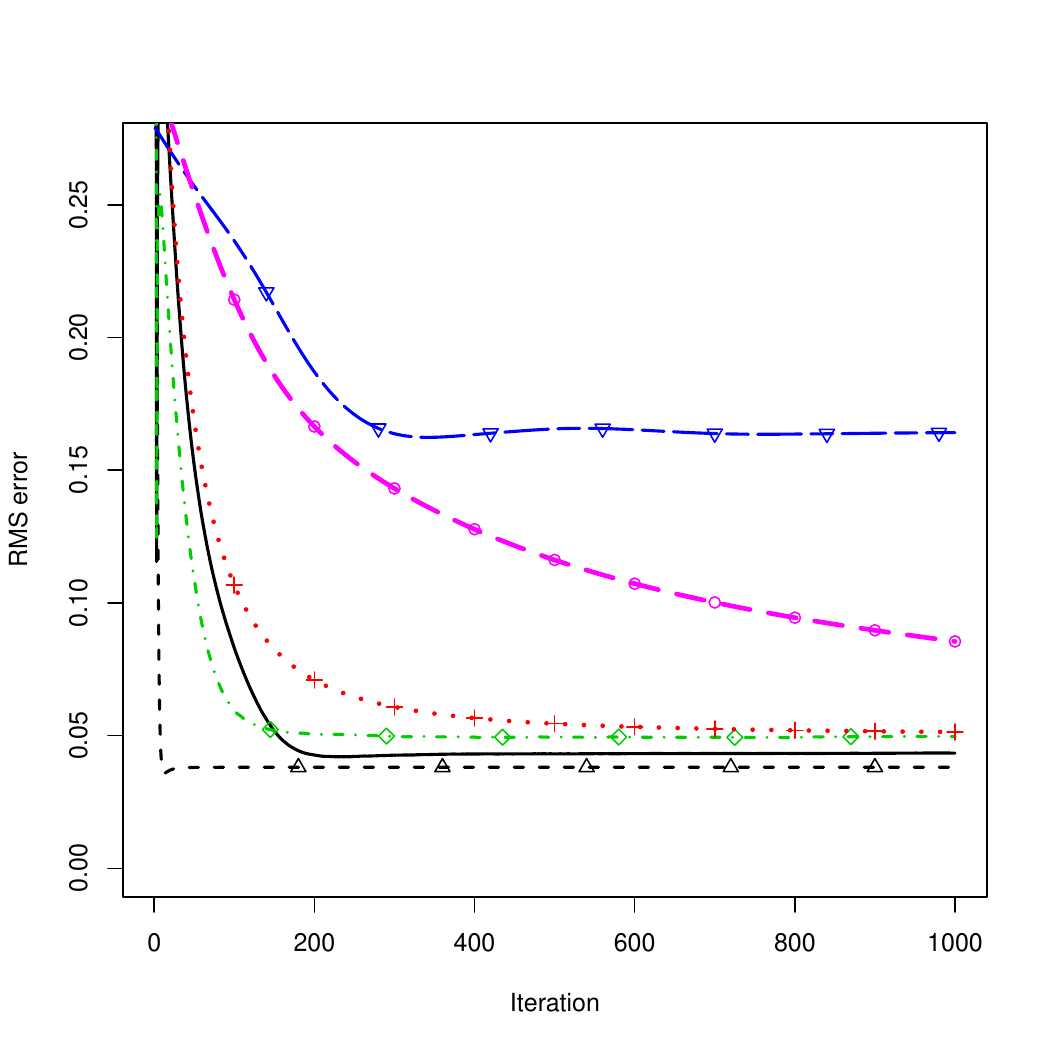}} 
   \caption{Root mean squared error of parameter estimates $\phi$ (left panel) and $\sigma$ (right panel) averaged over 20 Monte Carlo simulations from our $\mathcal{O}(N)$ algorithm with $\lambda=0.95$  (\protect\rule[0.5ex]{0.8cm}{0.6pt}), Poyiadjis $\mathcal{O}(N)$ ( \color{blue} $\protect\rule[0.5ex]{0.1cm}{0.7pt}$ $\triangledown$ $\protect\rule[0.5ex]{0.4cm}{0.7pt}$ $\protect\rule[0.5ex]{0.1cm}{0.7pt}$ $\triangledown$ $\protect\rule[0.5ex]{0.4cm}{0.7pt}$  \color{black}), Poyiadjis $\mathcal{O}(N^2)$ (\color{green} $\boldsymbol{-} \cdot \Diamond \boldsymbol{-} \cdot\boldsymbol{-} \Diamond$  \color{black}), Fixed-lag smoother ( \color{cyan}$\protect\rule[0.5ex]{0.18cm}{0.9pt}$ $\circ$ $\protect\rule[0.5ex]{0.18cm}{0.9pt}$ $\protect\rule[0.5ex]{0.18cm}{0.9pt}$ $\circ$ $\protect\rule[0.5ex]{0.18cm}{0.9pt}$ \color{black}), Fixed-lag smoother ( \color{red} $\boldsymbol{\cdot}$ $\boldsymbol{\cdot}$ $+$ $\boldsymbol{\cdot}$ $\boldsymbol{\cdot}$ $+$ $\boldsymbol{\cdot}$  \color{black})
   with score only and the Kalman filter estimate ($\protect\rule[0.5ex]{0.1cm}{0.5pt}$ 
   $\vartriangle$  $\protect\rule[0.5ex]{0.1cm}{0.5pt}$  $\protect\rule[0.5ex]{0.1cm}{0.5pt}$ $\vartriangle$ $\protect\rule[0.5ex]{0.1cm}{0.5pt}$).}
  \label{fig:ARpar_batch}
\end{figure}

Figure \ref{fig:ARpar_batch} gives the RMS error of the parameters estimated using the Newton-Raphson algorithm \eqref{eq:35} averaged over 20 Monte Carlo simulations. Our algorithm, the fixed-lag smoother and the  $\mathcal{O}(N)$ algorithm of Poyiadjis were implemented with 50,000 particles and the $\mathcal{O}(N^2)$ algorithm was implemented with 1,000 particles. For our algorithm we set $\lambda = 0.95$ and for the fixed-lag smoother $L=7$. In terms of computational cost, given the number of particles, our algorithm has more than a 10 fold computational time saving compared to the $\mathcal{O}(N^2)$ algorithm. The fixed-lag smoother was implemented with and without the observed information matrix applied in the gradient ascent algorithm.

The RMS error of the $\mathcal{O}(N^2)$ algorithm given in Figure \ref{fig:ARpar_batch} is comparable to the error given by our $\mathcal{O}(N)$ algorithm, however, it is important to remember that this is achieved with a significant computational saving. Compared to the Poyiadjis $\mathcal{O}(N)$ algorithm, our $\mathcal{O}(N)$ algorithm and the fixed-lag smoother (using only the score estimate) produce lower RMS error. Using a fixed-lag smoother estimate of the observed information matrix in the Newton-Raphson algorithm leads to higher RMS error than when only the score is used. The poor performance of the fixed-lag approach was discussed in Section \ref{sec:comp-appr} and is attributed to the error in estimating the observed information matrix. 

Illustrating the robustness of $\lambda$ in our $\mathcal{O}(N)$ algorithm, Figure \ref{fig:PE_vary_lam} gives estimates for $\theta$ using the offline \eqref{eq:35} and online \eqref{eq:1} gradient ascent algorithms for varying values of $\lambda$ (for the online case we simulated 60,000 observations). We see that there is little difference between $\lambda=0.99$ and $\lambda=0.95$, but more importantly, for $\lambda=0.5$ the parameters are converging to the MLEs, only at a slower rate. This was also the case for much lower choices of $\lambda$ (e.g. $\lambda=0.1$), which are not shown here, but for which the parameters converged to the MLE at an even slowly rate. 
\begin{figure}[h]
  \centering
   \subfigure{\includegraphics[width=0.45\textwidth]{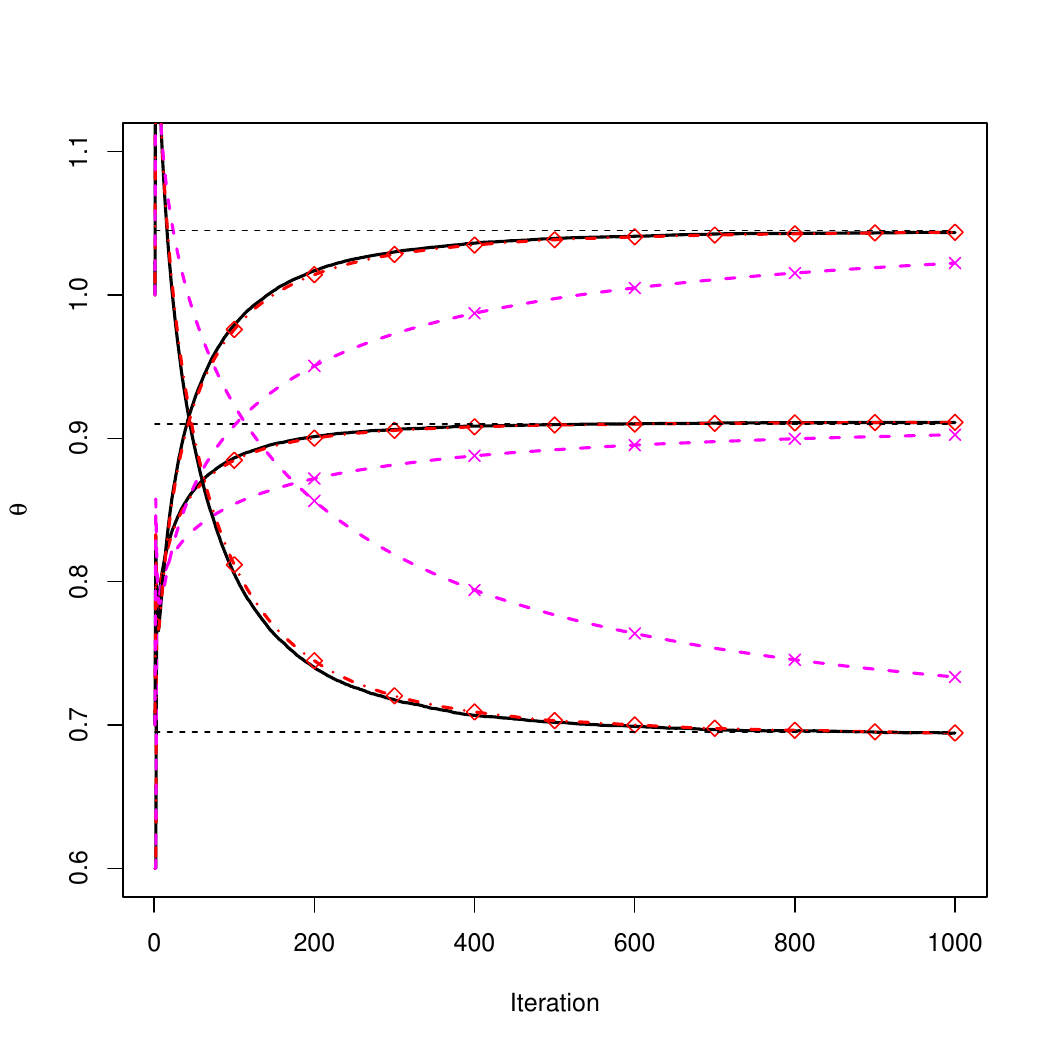}} 
   \subfigure{\includegraphics[width=0.45\textwidth]{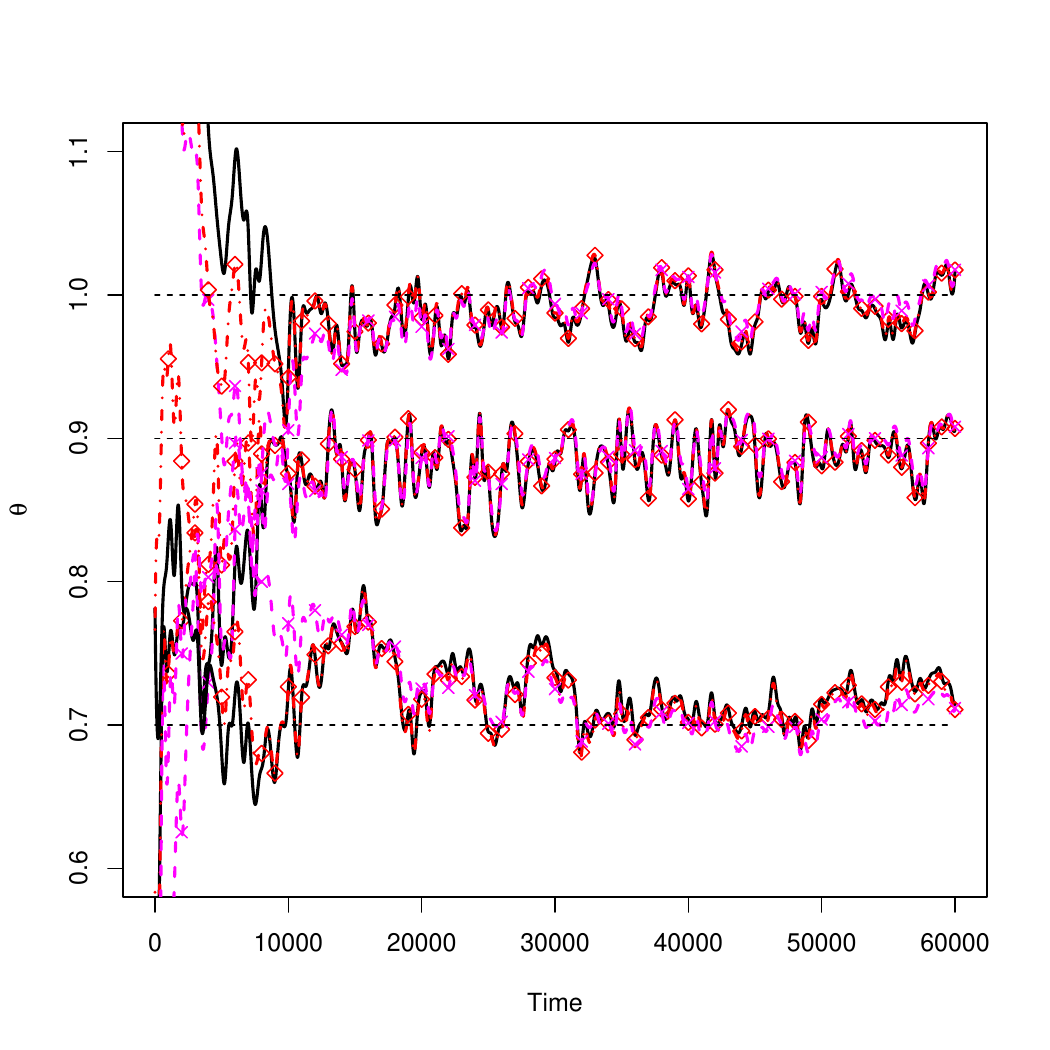}} 
   \caption{Batch (left panel) and recursive (right panel) parameter estimation for $\lambda=0.99$ (\protect\rule[0.5ex]{0.8cm}{0.6pt}), 
   $\lambda=0.95$ ( \color{red}$\boldsymbol{-} \cdot \Diamond \boldsymbol{-} \cdot\boldsymbol{-} \Diamond$  \color{black}) and $\lambda=0.5$ (  \color{cyan}$\protect\rule[0.5ex]{0.1cm}{0.5pt}$ $\times$ 
   $\protect\rule[0.5ex]{0.1cm}{0.5pt}$  $\protect\rule[0.5ex]{0.1cm}{0.5pt}$ $\times$ $\protect\rule[0.5ex]{0.1cm}{0.5pt}$  \color{black}).}
  \label{fig:PE_vary_lam}
\end{figure}

Using the recursive gradient ascent scheme \eqref{eq:1} we can compare our method against the online Bayesian particle learning algorithm \citep{Carvalho2009}. Particle learning uses MCMC moves to sequentially update the parameters within an SMC algorithm. A prior distribution is selected for each of the parameters which is updated at each time point via a set of low-dimensional sufficient statistics (see the supplementary materials for implementation details).

We generated $40,000$ observations from the AR(1) model and considered three different sets of true parameter values, chosen to represent different degrees of dependence within the underlying state process: $\phi=0.9$, 0.99 and 0.999. We set $\sigma^2 = 1-\phi^2$ so that the marginal variance of the state is $1$ and fixed $\tau=1$. We maintain the same initial parameters $\theta_0$ for the gradient scheme as was used for the batch analysis. 

\begin{figure}[h]
  \centering
   \subfigure{\includegraphics[width=0.45\textwidth]{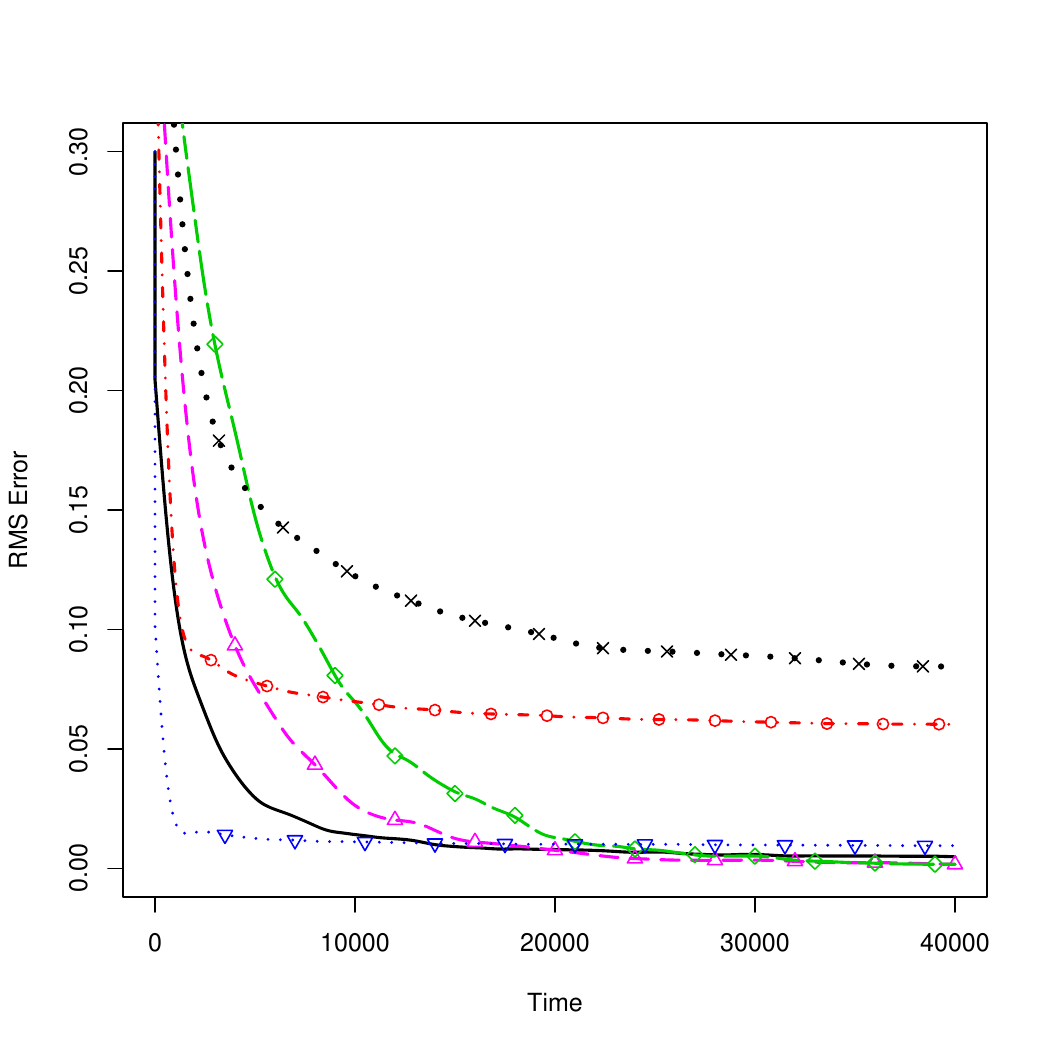}} 
   \subfigure{\includegraphics[width=0.45\textwidth]{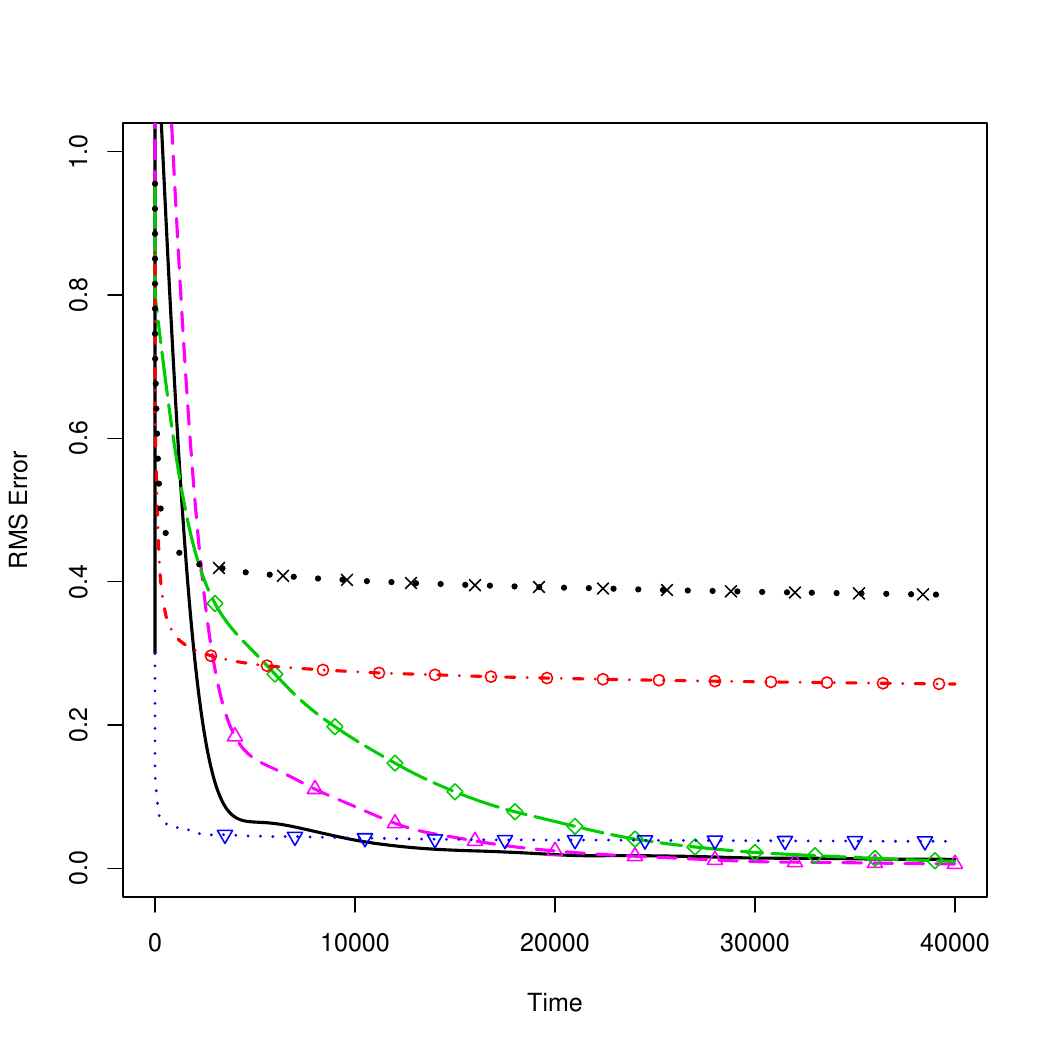}} 
   \caption{Root mean squared error of parameter estimates $\phi$ (left panel) and $\sigma$ (right panel) averaged over 100 Monte Carlo simulations from our algorithm with $\lambda=0.95$ and $\phi = 0.9$ (\protect\rule[0.5ex]{0.8cm}{0.7pt}), $\phi=0.99$ (\color{cyan} $\protect\rule[0.5ex]{0.2cm}{0.7pt}$ $\vartriangle$ $\protect\rule[0.5ex]{0.2cm}{0.7pt}$ $\protect\rule[0.5ex]{0.2cm}{0.7pt}$ $\vartriangle$ $\protect\rule[0.5ex]{0.2cm}{0.7pt}$ \color{black}), $\phi=0.999$ (\color{green} $\protect\rule[0.5ex]{0.1cm}{0.7pt}$ $\Diamond$ $\protect\rule[0.5ex]{0.4cm}{0.7pt}$ $\protect\rule[0.5ex]{0.1cm}{0.7pt}$ $\Diamond$ $\protect\rule[0.5ex]{0.4cm}{0.7pt}$ \color{black}) and the particle learning algorithm with $\phi = 0.9$ (\color{blue} $\cdot\triangledown\cdot\cdot\triangledown\cdot\cdot\cdot$ \color{black}), $\phi = 0.99$ (\color{red} $\boldsymbol{-} \circ \cdot \boldsymbol{-} \circ \cdot\boldsymbol{-}$ \color{black}), $\phi = 0.999$ ($\boldsymbol{\cdot\cdot}\times\boldsymbol{\cdot\cdot}\times\boldsymbol{\cdot\cdot}$).} 
  \label{fig:ARpar_recursive}
\end{figure}

Figure \ref{fig:ARpar_recursive} shows the RMS error of our $\mathcal{O}(N)$ algorithm applied to estimate the parameters $\theta_t$, against the particle learning filter over 100 data sets. The results show that the particle learning filter produces a lower RMS error than our algorithm for the first few thousand observations, but that it degenerates over very long time-series, particularly in the case of strong dependence ($\phi=0.99$ and 0.999). This is due to degeneracy in the sufficient statistics that occurs as a result of their dependence on the complete latent process, and the fact that the Monte Carlo approximation to $p(x_{1:T}|y_{1:T},\theta)$ degrades as $T$ increases \citep{Andrieu2005}. This degeneracy is particularly pronounced for large $\phi$, as this corresponds to cases where the underlying MCMC moves used to update the parameters mix poorly. 

Over longer data sets, applying gradient ascent with our $\mathcal{O}(N)$ algorithm, outperforms particle learning. As $\phi$ approaches 1, the long term state dependence is increased, as is the distance between the true parameter values and the fixed starting values used to initiate the gradient scheme. Our method appears to take longer to converge in this setting, but compared to particle learning, our method appears to be more robust to the choice of $\phi$, and for this reason, maximum likelihood methods are preferred over particle learning when estimating parameters from long time series. See \cite{particleLearning} for a further discussion on the implementation challenges of particle learning.

\subsection{Nonlinear seasonal Poisson model}
\label{sec:nonl-seas-poiss}

In this section we demonstrate our methodology on a nonlinear state space model, where we estimate the parameters from a real data set and show that these estimates are in agreement with previous studies. 

We consider a time series of monthly counts of poliomyelitis in the United States from January 1970 to December 1983. This time series was introduced by \cite{Zeger1988} and has since been analysed by \cite{Chan1995}, who used a Monte Carlo EM algorithm, and \cite{Davis2005} and \cite{Langrock2011} who both estimated the parameters using an approximate likelihood approach. The proposed model accounts for the observed seasonality of polio outbreaks and also contains a trend component which is the main interest in determining whether or not there is a decreasing trend:
\begin{eqnarray}
  \label{eq:13}
Y_t|X_t= x_{t},z_{t} \sim N_t[0,x_t\exp(z_t)], \quad\quad X_t|X_{t-1}=x_{t-1} \sim \mathcal{N}(\phi x_{t-1},\sigma^2) \quad\quad\quad\quad\quad\quad\\  
\log(z_t) = \mu_1+\mu_2 \frac{t}{1000} + \mu_3 \cos \left(\frac{2\pi t}{12}\right) + \mu_4 \sin \left(\frac{2\pi t}{12}\right) + \mu_5 \cos\left(\frac{2\pi t}{12}\right) + \mu_6 \sin \left( \frac{2\pi t}{12}\right), \nonumber
\end{eqnarray}
where $N_t[a,b]$ denotes the number of events in time interval $(a,b]$.

The model parameters $\theta=(\mu_1,\mu_2,\mu_3,\mu_4,\mu_5,\mu_6,\phi,\sigma^2)^\top$ are estimated using the gradient ascent algorithm, where the score vector is estimated using our proposed method (Alg. \ref{alg6}) with $\lambda=0.95$ and $0.7$.  We compare our method against the fixed-lag smoother and the Poyiadjis $\mathcal{O}(N)$ and $\mathcal{O}(N^2)$ algorithms. Each method was implemented with $N=1,000$ particles, except the Poyiadjis $\mathcal{O}(N^2)$ algorithm, which was implemented with $N=33 \approx \sqrt{1,000}$ . The fixed-lag smoother was run with lag $L=5$ and $20$.

\begin{table}[t]
  \begin{center}
\begin{tabular}{ c|cccccccc }
Algorithm & \multicolumn{8}{c}{Maximum likelihood estimates} \\
\hline
 & $\mu_1$ & $\mu_2$ & $\mu_3$ & $\mu_4$ & $\mu_5$ & $\mu_6$ & $\phi$ & $\sigma^2$ \\
Our alg. $\lambda=0.95$ &  0.26 &-3.89 & 0.16 &-0.48 & 0.41 &-0.01  &0.65  &0.28 \\
Our alg. $\lambda=0.70$ &  0.26 &-3.98 & 0.16 &-0.49 & 0.41 &-0.02  &0.61  &0.30 \\
Fixed-lag (L=5)         &  0.32 &-4.42 & 0.18 &-0.47 & 0.42 &0.00  &0.66  &0.27 \\
Fixed-lag (L=20)        &  0.32 &-4.43 & 0.18 &-0.47 & 0.42 &0.00  &0.66  &0.27 \\
Poyiadjis $\mathcal{O}(N)$ &  0.12 &-4.66 & 0.18 &-0.51  &0.41 &-0.01  &0.27  &1.00 \\
Poyiadjis $\mathcal{O}(N^2)$ & 0.21 &-3.53 & 0.14 &-0.49 & 0.43 &-0.05 & 0.66  &0.28 \\
D \& R                     & 0.24 &-3.81 & 0.16 &-0.48 & 0.41 &-0.01 & 0.63  &0.29 
\end{tabular}
\caption{\label{tab:mles} Results of batch parameter estimation for competing models using the gradient ascent algorithm \eqref{eq:35} initialised at $\theta_0=(0.4,-3,0.3,-0.3,0.65,-0.2,0.4,0.4)$. Results given by \cite{Davis2005} are quoted as D\&R.}
  \end{center}
\end{table}

Parameter estimates for the seasonal Poisson model are given in Table \ref{tab:mles}, where the batch implementation of the gradient ascent algorithm was executed for $2,000$ iterations. Given the short data set (T=168), we do not consider recursive parameter estimation.

We give the results from using our method with $\lambda=0.95$ and $\lambda=0.7$, and note that almost identical parameter estimates were obtained for $\lambda \in [0.5,0.99]$. We can see that for our method, the parameter estimates are consistent with the results presented by \cite{Davis2005} and \cite{Langrock2011}. To understand the performance of the methods we re-ran each of them 20 times to see the Monte Carlo variability in the parameter estimates. For our method, the fixed-lag smoother and the $\mathcal{O}(N^2)$ method, we obtained almost identical estimates for each run. However the $\mathcal{O}(N)$ method of Poyiadjis et al. showed increased variation in the estimates (for example the range of the estimates for $\mu_2$ was [-4.76, -4.53]). The fixed-lag smoothers performed equally well for $L=5$ and $20$ with little difference between the two implementations. Most of the parameters are estimated well using the fixed-lag smoother, but the bias of the score estimates does lead to poor estimation of $\mu_1$ and $\mu_2$. All of the algorithms, except the Poyiadjis $\mathcal{O}(N)$ and $\mathcal{O}(N^2)$ algorithms converged after approximately 500 iterations (figures available in the supplementary material). This is due to the Monte Carlo variation in the score estimates which directly impacts the parameter estimates. In the case of the $\mathcal{O}(N^2)$ algorithm, this variation could be reduced by increasing the number of particles, but at a significantly increased computational cost compared to our method.

\section{Discussion}
\label{sec:conclusions}

In this paper we have presented a novel sequential Monte Carlo method for estimating the score vector and observed information matrix for nonlinear, non-Gaussian state space models. Previous approaches have produced estimates with quadratically increasing variance at a computational cost that is linear in the number of particles, or achieved linearly increasing variance at a quadratic computational cost. 

The algorithm we have developed combines techniques from kernel density estimation and Rao-Blackwellisation to yield estimates of both the score vector and the observed information matrix which display only linearly increasing variance, which is achieved at a linear computational cost. Importantly, we have shown that this approximate score vector, at the true parameter value, has expectation zero when taken with respect to the data. Thus, the resulting gradient ascent scheme uses Monte Carlo methods to approximately find the solution to a set of unbiased estimating equations. 

The estimates of the score and observed information given by our $\mathcal{O}(N)$ algorithm can be applied to the gradient ascent and Newton-Raphson algorithms to obtain maximum likelihood estimates of the model parameters. This can be achieved either offline or online, where the parameters are estimated from a batch of observations, or recursively from observations received sequentially. Furthermore, we have shown that in terms of parameter estimation, our algorithm is relatively insensitive the the choice of $\lambda$. However we do note that setting $0.90<\lambda<0.99$ produces low variance estimates of the score with minimal bias, which also results in faster parameter convergence.

For a significant reduction in computational time we can achieve improved parameter estimation over competing methods in terms of minimising root mean squared error. We also compared our algorithm to the particle learning filter for online estimation. The particle learning filter performs well initially but degenerates over time, whereas our algorithm is more accurate over longer time series. Our method also appears to be robust to the choice of model parameters compared to the particle learning filter which struggles to estimate the parameters when the states are highly dependent.

\centering \section*{Supplementary Materials}
\label{sec:suppl-mater}

\begin{description}
\item[Appendices:] Proofs for Lemma 1 and Theorem 1. Also, a derivation of the particle learning updates and a plot for the nonseasonal Poisson model example. (pdf)

\item[R code:] R code for the examples in Section \ref{sec:parameter-estimation}. (Rcode.zip, zip file)
\end{description}

\bibliography{library,libA}
\bibliographystyle{apalike}

\end{document}